\def\ignore#1\endignore{}
\newcolumntype{h}{@{}>{\ignore}l<{\endignore}} 
\newcolumntype{x}[1]{%
{\centering\hspace{0pt}}p{#1}}%
\newenvironment{sketch}{{\noindent \it Sketch of Proof:~}}
\newtheorem*{rep@theorem}{\rep@title}
\newcommand{\newreptheorem}[2]{%
\newenvironment{rep#1}[1]{%
 \def\rep@title{#2 \ref{##1}}%
 \begin{rep@theorem}}%
 {\end{rep@theorem}}}
\newtheorem{theorem}{Theorem}
\newtheorem{definition}{Definition}
\newtheorem{lemma}{Lemma}
\begin{document}
\title{A Multi-traffic Inter-cell Interference \\ Coordination Scheme in Dense Cellular Networks}
\author{Vincenzo~Sciancalepore,~\IEEEmembership{Member,~IEEE,}
        Ilario~Filippini,~\IEEEmembership{Senior Member,~IEEE,}
        Vincenzo~Mancuso,~\IEEEmembership{Member,~IEEE,}
        Antonio~Capone,~\IEEEmembership{Senior Member,~IEEE} Albert~Banchs,~\IEEEmembership{Senior Member,~IEEE}
	\thanks{V.~Sciancalepore is with NEC Europe Ltd. (vincenzo.sciancalepore@neclab.eu).
	
V.~Mancuso is with IMDEA Networks Institute, Spain.

A.~Banchs is with the University Carlos III of Madrid and with IMDEA Networks Institute, Spain. 

I.~Filippini and A. Capone are with Politecnico di Milano, Italy. 
}
}
\maketitle
\psfull

\setlength{\textfloatsep}{5pt}

\begin{abstract}
This paper proposes a novel semi-distributed and practical ICIC scheme based on the Almost Blank SubFrame (ABSF) approach specified by 3GPP. We define two mathematical programming problems for the cases of guaranteed and best-effort traffic, and use game theory to study the properties of the derived ICIC distributed schemes, which are compared in detail against unaffordable centralized schemes. 
Based on the analysis of the proposed models, we define {\it Distributed Multi-traffic Scheduling} (DMS), a unified distributed framework for adaptive interference-aware scheduling of base stations in future cellular networks which accounts for both guaranteed and best-effort traffic. 
DMS follows a two-tier approach, consisting of local ABSF \emph{schedulers}, which perform the resource distribution between guaranteed and best effort traffic, and a lightweight local \emph{supervisor}, which coordinates ABSF local decisions. 
As a result of such a two-tier design, DMS requires very light signaling to drive the local schedulers to globally efficient operating points. 
As shown by means of numerical results, DMS allows to $(i)$ maximize radio resources reuse, $(ii)$ provide requested quality for guaranteed traffic, $(iii)$ minimize the time dedicated to guaranteed traffic to leave room for best-effort traffic, and $(iv)$ maximize resource utilization efficiency for best-effort traffic.
\end{abstract}
\thispagestyle{empty}	

\section{Introduction}
\label{s:intro}
The very fast growth of mobile data traffic and the increasing expectations of end users for high rates are pushing wireless industry to speed-up the introduction of new cellular technologies. 
Indeed, it is commonly recognized that the new challenges posed by mobile traffic can be handled only with new technologies and network architectures~\cite{cisco}. This is driving the evolution towards two main directions: on one side the use of high-frequency spectrum portions even under harsh signal propagation~\cite{5G-MiWEBA}, and on the other side the densification of network deployments with a very large number of base stations~\cite{LTEadv}. The latter scenario clearly exacerbates interference issues in traditional mobile spectrum portions and calls for novel resource management schemes. 
To address this point, we focus on interference control for future cellular networks.

As basic approach, 3GPP has proposed the {\it Almost Blank Sub-Frame} (ABSF~\cite{absf}) scheme to easily allow coexistence between macro-cells and limited-power cells (e.g., small or femto-cells) in heterogeneous networks. In particular, the ABSF is used to prevent base station transmissions in selected time slots, thereby reducing the inter-cell interference towards small-cell users. Recently, ABSF has been extensively adopted as a novel inter-cell interference coordination scheme even for homogeneous networks, wherein only macro-cells are deployed (\cite{GMO_2016,SciancaleporeMBZCP14,SMB13}). This novel ABSF standpoint poses new challenges while designing efficient mechanisms for automatically selecting ABSF transmission patterns.

ABSF schemes that have been proposed so far mainly approach the problem from a centralized point of view. This requires a huge exchange of  Channel State Information (CSI) messages 
~\cite{xueying2011cell,irmer2011coordinated,ic_ofdma} and poses scalability issues in dense deployments~\cite{singh2014joint,yinghao2013joint}. Actually, ABSF scheduling is known to be a NP-hard problem~\cite{SciancaleporeMBZCP14} and tackling it for large portions or the entire network is unpractical. 
Moreover, existing schemes compute ABSF patterns without explicitly considering quality guarantees, but they rather aim to increase the spectral efficiency through stochastic approaches, assuming worst case interference conditions~\cite{SMB13}. 
By including traffic guarantees in the loop, the complexity of a centralized approach would increase even further. In contrast, a distributed approach with local decisions would not only be aligned with the well accepted self-organizing network concepts~\cite{son_commag}, but it would also allow to jointly decide in real-time ABSF patterns and user scheduling, rather than assuming worst case conditions for the scheduling process. 
However, a critical aspect in the design of a distributed scheme is the amount of signaling information between base stations. 

In this paper, we propose Distributed Multi-traffic Scheduling (DMS), a resource management scheme providing a lightweight ABSF coordination of \emph{local schedulers} (base stations) with the help of a \emph{supervisor}, which guides ABSF decisions of the base stations and drives the system to a high performance operating point while avoiding fully centralized decisions on ABSF patterns. Hence, DMS defines a semi-distributed approach that offloads and reduces the computational burden from a centralized controller to base stations, while drastically abating the signaling overhead. This makes our approach the first proposal towards a practical and effective solution to ABSF that can be implemented in real networks.

The design of DMS is driven by a game theory-based analysis of two optimization problems: ($i$) minimizing the use of resources for guaranteed traffic and ($ii$) maximizing the efficiency of best-effort communications in the resources left available by the guaranteed traffic. In our model, each base station is a player whose ``moves'' consist in selecting ABSF transmission patterns and announcing them to the neighbors. The supervisor simply instructs the base stations on the amounts of resources to be dedicated to guaranteed traffic and to best-effort traffic. In particular, the base stations play two games: first they play a {\it Distributed Inelastic Game} $\Gamma$ to make decisions on guaranteed traffic allocations, and then they play an {\it Interference Coordination Game} $\Omega$ to decide how to allocate best effort traffic across base stations on the remaining resources. 
By using a specific class of best response strategies, we show that $\Gamma$ always converges to a Nash equilibrium with an algorithm that compacts TTIs used by ABSF patterns of all base stations and leaves as much room as possible for best-effort traffic. We also show that $\Omega$ is a weighted player-specific bottleneck matroid congestion game, which requires the presence of a high-level supervisor to converge.

We validate the proposed scheme via simulation and show that, despite its low complexity and the very limited amount of control messages required, DMS achieves near-optimal performance in terms of:  
$(i)$ maximizing radio resources reuse, $(ii)$ providing required quality to guaranteed traffic, $(iii)$ minimizing the time dedicated to guaranteed traffic to leave room for best-effort traffic, and $(iv)$ maximizing resource utilization efficiency of best-effort traffic.
DMS also exhibits significant advantages over existing schemes in terms of efficiency, complexity, fairness, and throughput. In addition, the comparison with existing power control schemes reveals that complex approaches (e.g.,~\cite{shroff_ubpc}) bring little additional gain with respect to DMS and behave less fairly, whereas low-complexity solutions (e.g.,~\cite{refim}) exhibit lower efficiency.

In the following, we formulate the guaranteed traffic and best-effort traffic problems from a centralized viewpoint in Section~\ref{s:system_model}. In Section~\ref{s:distr_inel} and Section~\ref{s:distr_elastic} we define a distributed version of such problems, and use game theory to analyze them. In Section~\ref{s:framework_dms} we propose DMS, a unified framework meant to supervise the distributed solution of the two distributed problems. DMS is validated in Section~\ref{s:results}. Section~\ref{s:related} provides a thorough report on the current literature and Section~\ref{s:conclusions} concludes the work with some final remarks.

\vspace{-2mm}
\section{System model and centralized problems}
\label{s:system_model}
The goal of ICIC algorithms is to improve system spectral efficiency by  controlling base station mutual interference so that transmissions can be performed with high rate modulation and coding schemes.
To this end, 3GPP has defined the ABSF mechanism, where base stations are instructed to remain silent over some periods in order to avoid interfering with each other, and thus, harming overall performance. Specifically, ABSF orchestrates base stations activities by performing scheduling on a time-slot basis, i.e., per Transmission Time Interval (TTI), and forcing base stations to be silent in some selected TTIs. We say that such base stations are blanked, and refer to the overall schedule of base stations as ABSF time-patterns. 

In the following, we formulate the ICIC problem wherein the ABSF standard technique is implemented. A problem solution consists in a set of  {\it ABSF time-patterns}, i.e., bitmaps that specify which TTIs must be blanked, to be assigned to base stations.\footnote{For the sake of simplicity, problem formulations presented in this section consider downlink traffic only. However, an extended model can be derived for uplink transmissions rather easily.} Our network model addresses two distinct traffic classes: ($i$) guaranteed bit-rate (GBR) and ($ii$) best-effort. While the former is subject to a strict rate constraints and it is accommodated with higher priority, the latter can be served with the remaining resources since it has no stringent requirements in terms of latency and bandwidth.

We tackle the above problem by observing that it can be solved in two steps, due to the fact that the GBR traffic is inelastic while the best-effort one is elastic:
($i$) first, one can find a global time-allocation strategy for different base stations able to accommodate GBR traffic demands into a minimum number of TTIs, ($ii$) then, use the TTIs left to serve best-effort traffic while maximizing the network spectral efficiency and guaranteeing a good level of fairness. 

In what follows, we first formulate the ICIC problem from a \emph{centralized} scheduling perspective for both traffic types. 
While such a solution is practically infeasible due to computational complexity and signaling overhead, it offers
a benchmark corresponding to the best possible performance of any algorithm. 

\subsection{Optimizing GBR Traffic Period}
\label{s:inelastic_problem_centr}

We formalize the problem of optimizing the GBR traffic period length as follows. 
Let us assume that the network consists of a set $\mathcal{N}$ of base stations, each of which having a set of users  
$\mathcal{U}_i$, so that $\mathcal{U} = \cup_{i\in \mathcal{N}} \mathcal{U}_i$ is the set of users in the network. 
Let the GBR traffic demand of each user be known at the base station side, expressed in volume of traffic to be periodically served, and denoted as $D_u, u \in \mathcal{U}_i$. Let $W$ denote the available {\it time horizon} (in TTIs), i.e., the length of ABSF patterns, which means that the user demand guaranteed rate is $D_u / (W \cdot T_{slot})$ bps, where $T_{slot}$ is the duration of a single TTI.
Let us further assume that a base station can schedule at most one user in each TTI, and some TTIs can be~\emph{blanked} by means of the ABSF pattern. While this assumption helps to keep tractable our problem formulation, it can be readily relaxed taking into account multiple users scheduled within the same TTI.

The {\it GBR traffic period} devoted to serve the GBR traffic demands will be no longer than a given portion of the $W$ TTIs; without loss of generality, let us assume that this period is a set of consecutive TTIs, $\mathcal{T}= \{1,2, \dots , W\}$. 
The objective of the optimization problem is to allocate user demands in the smallest number of TTIs, $L \leq W$, while satisfying channel quality constraints. 
The remaining $W-L$ TTIs can be used for best-effort traffic allocation. 

Since the system has limited capacity, the above problem may not be feasible as it may not be possible to allocate the entire user demand set within the assigned slots. In order to ensure that the problem is always mathematically feasible, we define a per-user {\it penalty} $p_u$ representing the unserved demand. As long as no penalty is accumulated, the solution minimizes the GBR traffic period $L$ leaving more room (e.g., $W\!-\!L$ TTIs) for best-effort traffic.

We formulate the optimization problem for the GBR traffic period by considering an SINR-aware scheduling of users with interference thresholds and penalties.
We denote by $\alpha>0 $ the relative importance of penalties $p_u$ over utilized TTIs and by $s_t$ the binary variables indicating whether TTI $t$ is used for transmissions by at least one base station. 
Similarly, binary variables $y_{i,t}$ indicate whether BS $i$ uses TTI $t$ and binary variables $x_{u,t}^m$ indicate whether user $u$ is scheduled in TTI $t$ with modulation and coding scheme (MCS) $m \in \mathcal{M}$, in which case it receives a rate $R^m \in \mathcal{R}$, in bits per-TTI.
We use $L$ to indicate the highest index of used TTIs within $\mathcal{T}$. 
As concerns the rate assigned to users, we consider the signal-to-interference-plus-noise ratio (SINR) computed with 
base station transmission power $P$\,\footnote Following current cellular deployments, we assume that base stations transmit at a constant power. However, our solution can be easily extended to heterogeneous deployments wherein different power levels are set without affecting the overall system performance., 
channel gain $G_{u,k}$ between user $u$ and the base station $k$, 
and background noise $N_0$.
The use of transmission rate $R^m$ is subject to the availability of a SINR value greater than a corresponding threshold $\gamma^{m}$ (see, e.g., \cite{LTEadv36213}).
With the above definitions, the problem can be formally defined as follows:

\vspace{1mm}
\noindent \textbf{Problem} \texttt{GBR}:
\begin{equation*}
\label{pr:gbr}
\begin{array}{ll}
\text{minimize }  & L + \alpha \sum\limits_{u \in \mathcal{U}} p_u, \\
\text{subject to } & t s_t \leq L, \forall t \in \mathcal{T};\\
       & \sum\limits_{i \in \mathcal{N}} y_{i,t} \leq N s_{t}, \forall t \in \mathcal{T}; \\
			 & \sum\limits_{u \in \mathcal{U}_i, m \in \mathcal{M}} x_{u,t}^m \leq y_{i,t}, \forall i \in \mathcal{N}, t \in \mathcal{T};\\
			 & \sum\limits_{u \in \mathcal{U}, m \in \mathcal{M}} x_{u,t}^m \leq 1,  \forall t \in \mathcal{T};\\
			 & \frac{P\,G_{u,i}}{N_0 + \sum\limits_{k \in \mathcal{N}: k \neq i} P\,G_{u,k} \cdot y_{k,t}} \geq \gamma^m \cdot x_{u,t}^m,\\
			 & \quad \qquad \forall i \in \mathcal{N}, u \in \mathcal{U},t \in \mathcal{T},m \in \mathcal{M}; \\
			 & \sum\limits_{m \in \mathcal{M}, t \in \mathcal{T}} R^m x_{u,t}^m + p_u \geq D_u, \forall u \in \mathcal{U};\\
			 & s_t, x_{u,t}^m \in \{0;1\}, \forall u \in \mathcal{U}, m \in \mathcal{M}, t \in \mathcal{T};\\
			 & y_{i,t} \in \{0;1\}, \forall i \in \mathcal{N}, t \in \mathcal{T};\\
			 & p_u \geq 0.
\end{array}
\end{equation*}

The first set of constraints forces the correct value to be  assigned to $L$. The second and third sets of constraints impose the coherence between, respectively, ($i$) active BSs and used TTIs, ($ii$) scheduled users and active BSs. The fourth set of constraints imposes that at most one user may be scheduled in each TTI. 
The fifth set of constraints is used to match SINR and used rates. Although SINR constraints are not linear, they can be easily linearized. 
The sixth set of constraints expresses the target per-user volume of data, while the remaining constraints are used to define the ranges of the decision variables. In particular, the last constraint is used to set penalty values in order to compensate eventually unserved traffic demands. Problem \texttt{GBR} can be solved with state-of-the-art Mixed-Integer Linear Programming (MILP) solvers.

The main assumption behind the centralized model is that users' CSI is perfectly known. Such information is gathered and updated by a centralized optimizer, which uses it to compute SINR constraints.

Note that Problem~\texttt{GBR} can be abstracted by considering $t$ as a transmission opportunity rather than  a TTI. For instance, $t$ can be considered as an available physical resource block (PRB) serving multiple users.

\vspace{-4mm}
\subsection{Optimizing Best-effort Traffic Period}
\vspace{-0.5mm}
\label{s:problem}
Once a feasible GBR traffic period $L$ is found, the remaining $Z\!=\!W\!-\!L$ TTIs in the ABSF pattern will be used for accommodating best-effort traffic demands. Differently from the GBR case, here the goal is to obtain a user scheduling and BS activation that can efficiently exploit the remaining network resources by aiming at both spectral efficiency and  user fairness. We can formulate the optimization problem with an  Integer Linear Programming (ILP) model. The objective function to be maximized, $\widehat{\eta}$, is the sum of the utilities of the individual base stations. Following the 
\emph{max-min} fairness criterion, we define the utility of base station $i$ as the minimum rate of all its users and formalize the problem as follows:%
\footnote{The selected objective function provides a trade-off between maximizing the spectral efficiency and guaranteeing a minimum level of service quality.
Nevertheless, using a different objective function would not substantially change the proposed approach and the following analysis.}

\vspace{1mm}
\noindent \textbf{Problem} \texttt{BE}:%
\vspace{-2mm}
\begin{equation*}
\label{pr:be}
\begin{array}{ll}
\!\! \!\!\text{maximize} & \!\!\widehat{\eta} =\sum\limits_{i \in \mathcal{N}} \left( \min\limits_{(u,t) \in \mathcal{U}_i \times \mathcal{Z}} \sum\limits_{m \in \mathcal{M}}R^m \cdot x_u^{m,t} \right), \\
\!\! \text{subject to } & \!\! \sum\limits_{u \in \mathcal{U}_i,m \in \mathcal{M}} x_u^{m,t} \leq y_{i,t}, \quad \forall i \in \mathcal{N},t \in \mathcal{Z},\\
\!\! 	 & \! \! \frac{P\,G_{u,i}}{N_0 + \sum\limits_{k \in \mathcal{N}: k \neq i} P\,G_{u,k} \cdot y_{k,t}} \geq \gamma^m \cdot x_u^{m,t}, \\
\!\!      &\quad \qquad \forall i \in \mathcal{N}, u \in \mathcal{U}_i,m\!\! \in \!\! \mathcal{M},t\!\! \in \mathcal{Z},\\
\!\!      &  \!\! \!\!\!\! y_{i,t},x_u^{m,t}\!\! \in\! \{0;1\}, \forall i \!\in \!\mathcal{N}\!, u \! \in \!\mathcal{U}_i, m \!\!\in \!\!\mathcal{M}, t\!\! \in \!\mathcal{Z};
\end{array}
\end{equation*}
where variables and parameters are defined exactly as in Problem~\texttt{GBR}. The set $\mathcal{Z}$ is the set of available TTIs for best-effort traffic and it is defined as $\mathcal{Z}=\left\{1,2...,Z\right\}$. The two sets of constraints correspond to the third and fourth ones in Problem~\texttt{GBR}.
Problem~\texttt{BE} can be reduced to a bin-packing problem in which the sum of interferences cannot exceed a threshold. Therefore, this problem is NP-hard~\cite{goussevskaia:capacity}. 

As stated before, the solution of Problem~\texttt{GBR} and Problem~\texttt{BE}  involves a very high overhead to deliver CSI information to a centralized optimizer, which needs this information to select the ABSF patterns and compute the user scheduling. In addition, due to problem complexity, while the centralized approach can be an attractive option for small networks, a computationally less complex approach is required to deal with the case of very dense wireless networks consisting of hundreds of base stations and thousands of wireless nodes. 

Accordingly, in the following section, we present a distributed and less complex approach to this joint problem in order to abate and distribute the computational load over the base stations. After analyzing the two problems individually, we then propose a joint framework for both problems.

\section{Guaranteed traffic}
\vspace{-1mm}
\label{s:distr_inel}
We formulate the GBR problem in a distributed way by splitting it into local problems that are solved by each base station. 
To reduce complexity, each base station only optimizes the scheduling of its own users and considers that other base stations use fixed ABSF patterns. However, this approach needs an iterative mechanism.

Each local optimization problem consists in minimizing a cost function $f_i$ that accounts for both the number of locally used TTIs and the total penalty related to unsatisfied local demands.
Variables and constraints are the same as in the centralized formulation, except for two aspects: ($i$) the local formulation considers only the users of the local base station, ($ii$) the activity of interfering stations in the SINR constraint is no longer optimized, but given as input. 
The key idea behind the distributed approach indeed affects only this constraint. In fact, the exact knowledge of which users are scheduled by other base stations is no  needed, as the base station activity is enough to compute SINR values. Therefore, it is sufficient to know the activity patterns of neighboring base stations, namely ABSF patterns, provided by binary vectors $\{A_t^k\}$. The formal description of the distributed problem is as follows: 

\vspace{1mm}
\noindent \textbf{Problem} \texttt{GBR-DISTR} (local, at BS $i$):
\begin{equation*}
\label{pr:gbr-distr}
\begin{array}{ll}
\text{minimize }  & f_i=\sum\limits_{u\in \mathcal{U}_i,t \in \mathcal{T}} x_{u,t}^m + \alpha\,\sum\limits_{u \in \mathcal{U}_i}p_u, \\
\text{subject to } & \sum\limits_{t \in \mathcal{T},m \in \mathcal{M}} R^m \cdot x_{u,t}^m + p_u \geq D_{u}, \quad \forall u \in \mathcal{U}_i; \\
				   & \sum\limits_{u \in \mathcal{U}_i, m \in \mathcal{M}} x_{u,t}^m \leq 1, \quad \forall t \in \mathcal{T}; \\
				   & \frac{P\,G_{u,b}}{N_0 + \sum\limits_{k \in \mathcal{N} \setminus {i}} P\,G_{u,k} \cdot A_t^k} \geq \gamma^m \cdot x_{u,t}^m, \\
				   & \quad \qquad \qquad \qquad \forall u \in \mathcal{U}_i,t \in \mathcal{T},m \in \mathcal{M}; \\
				   & x_{u,t}^m \in \{0;1\}, \quad \forall u \in \mathcal{U}_i, t \in \mathcal{T}, m \in \mathcal{M}; \\ 
				   & p_u \geq 0, \qquad \forall u \in \mathcal{U}_i.
\end{array}
\end{equation*}

Each base station $i$ is in charge of solving Problem~\texttt{GBR-DISTR} by computing the optimal user scheduling into available TTIs. Note that the solution of this problem depends on the solutions computed by the other base stations, since the SINR of each user is affected by the interference generated by the other base stations when they are active. Therefore, an iterative process is needed, which would ideally converge to a quasi-optimal solution where base stations agree on their respective ABSF patterns. However, the process could not converge at all. 

We now derive convergence properties and provide conditions on the guaranteed convergence by casting the distributed approach into a game.
Once the convergence is guaranteed, we finally present a practical distributed scheme that implements the distributed approach.
\label{s:game_inel}

{\bf Game Theoretical Analysis.} We introduce a new class of games, called \emph{Distributed Inelastic Games}, to model the interference coordination problem. 

The game that describes our problem is represented by means of a tuple $\Gamma \!=\! (\mathcal{N}, (\mathbb{S}_i)_{i\in \mathcal{N}}, (f_i)_{i\in \mathcal{N}})$.  The set of players is $\mathcal{N}$, i.e., the  base stations. For each player $i \in \mathcal{N}$, $\mathbb{S}_i$ is a family of user actions, namely a {\it strategy}, and $f_i$ is a cost function that expresses the cost associated to the implementation of each action, i.e., $f_i$ is the utility function in Problem \texttt{GBR-DISTR}. 
Each player $i$ decides her action in order to minimize the game cost function $f_i$.
Player $i$'s action is a set of user-TTI pairs $(u,t)$, which represents the base station's user scheduling in the ABSF time horizon. 
In particular, a valid action is a scheduling that satisfies the constraints of Problem~\texttt{GBR-DISTR}. This defines the action space $\mathbb{S}_i$ from which user $i$ selects her action $S_i$.
Note that the cost of each action depends on the other players' actions, because of interference, so we use the notation $f_i(S_i, S_{-i})$ to indicate the dependency on the action of $i$ as well as on the actions of any other user, $S_{-i}$.

With the above, the {\it best response} (BR) for game $\Gamma$ is defined as the action that produces the smallest cost function value for player $i$, taking the other players' actions as given. Analytically, $S^*_i \! \in \! \mathbb{S}_i $ is defined as BR if and only if
\begin{equation}
f(S^*_i, S_{-i}) \leq f(S_i, S_{-i}), ~ \forall S_i \in \mathbb{S}_i.
\end{equation}

In the following, we present a convergence analysis of game $\Gamma$, which is essential to ensure the feasibility and implementability of the distributed version. Indeed, due to the nature of the game, the arbitrary best responses taken by each player may not necessarily lead to an equilibrium (i.e., a Nash equilibrium); this is the case of game $\Gamma$ and it is shown in the proof of Theorem~\ref{theo:oscillations}.
\begin{theorem}
\label{theo:oscillations}
Distributed Inelastic Game $\Gamma$ does not possess a finite improvement property in best-response improvement dynamics.
\end{theorem}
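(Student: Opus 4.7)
The plan is to establish the negative result by constructing an explicit counterexample, i.e., a small instance of $\Gamma$ together with an initial strategy profile from which one can exhibit an infinite sequence of best-response moves that cycles. Since the finite improvement property requires every best-response improvement path to terminate, displaying a single cycle (equivalently, a closed improvement path of positive length) suffices.

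First, I would reduce the complexity of the instance as much as possible: two base stations $i_1,i_2$, a small time horizon (say $W=2$ or $W=3$ TTIs), and a handful of users per cell with carefully chosen channel gains $G_{u,k}$, demands $D_u$, and MCS thresholds $\gamma^m$. The gains should be asymmetric across users of the same cell so that user $u\in\mathcal{U}_{i_1}$ needs $i_2$ to be blanked to attain a high-rate MCS, while another user $u'\in\mathcal{U}_{i_1}$ can tolerate interference and be served at a low MCS. A symmetric construction should be placed at $i_2$. Setting $\alpha$ large enough forces the best response to always serve the full demand (avoiding penalties), and tuning $D_u$ so that it can be satisfied in one TTI only when the neighbor is silent, but requires two TTIs otherwise.

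Next, I would trace the best-response dynamics. Starting from the profile in which both BSs are active in TTI $1$ and silent in TTI $2$, BS $i_1$'s best response is to move its sensitive user into TTI $2$ (where $i_2$ is silent) and shorten the pattern, forcing its active TTI to $\{2\}$. In reaction, $i_2$ now sees $i_1$ active in TTI $2$ and silent in TTI $1$, so its best response symmetrically shifts activity into TTI $1$. Iterating, $i_1$ reacts again by flipping back to the original pattern, and so on. The chain of costs $f_{i_1},f_{i_2}$ should strictly decrease at every step by construction (which is exactly what best-response improvement requires), yet the profile returns to a previously visited state after finitely many steps, yielding the desired cycle.

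The main obstacle is the numerical calibration: the SINR constraints, the demand volumes $D_u$, and the MCS rates $R^m$ must be chosen so that at each step the responder has a \emph{strict} improvement (otherwise one only gets a best-response \emph{equivalence}, not an FIP-violating cycle), while simultaneously guaranteeing that the same configuration reoccurs. I would handle this by first designing the cycle of profiles abstractly and then back-solving for parameters: pick two cost values $c^{\text{low}}<c^{\text{high}}$, assign them to the profiles so that any unilateral deviation responding to the neighbor's current action strictly reduces $f_i$ from $c^{\text{high}}$ to $c^{\text{low}}$, and finally choose $G_{u,k}$, $P$, $N_0$, $\gamma^m$ and $D_u$ consistent with these costs. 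Once such an instance is exhibited and the improvement path is verified step-by-step, the theorem follows immediately from the definition of the finite improvement property.
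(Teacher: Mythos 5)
Your overall strategy --- refuting the finite improvement property by exhibiting a single best-response cycle in a small explicit instance --- is exactly the right one, and it is also the approach the paper takes. The problem is that the concrete instance you sketch does not actually produce a cycle. With two base stations and two TTIs, the game is a pure anti-coordination game: once $i_1$ vacates TTI $1$ and moves to TTI $2$, base station $i_2$ is alone in TTI $1$, suffers no interference, and has no strict incentive to move (shifting into TTI $2$ would only add interference and cannot lower $f_{i_2}$). The dynamics you describe therefore terminates at a Nash equilibrium after the very first move rather than cycling. This is not a matter of ``numerical calibration'': in a two-player game where the only coupling is interference and interference is always harmful, a player's improving move can hurt the other only by landing on the other's active TTIs, which a best response avoids whenever there is room to separate --- and with two players and $W=2$ there is always room. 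No choice of gains $G_{u,k}$, demands $D_u$, or thresholds $\gamma^m$ rescues the two-player chase as you describe it.

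The missing ingredient is a \emph{cyclically asymmetric} interference structure, which requires at least three players. The paper's counterexample uses $3$ base stations, $2$ TTIs, one user each, and per-TTI throughputs chosen so that $u_i$ still meets its demand $D_u=5$ when colliding with $u_{i+1}$ (it gets $5.11$) but not when colliding with $u_{i+2}$ (it gets $2.73$), indices taken cyclically. Hence when player $i$ moves, it lands harmlessly (for itself) on player $i+1$'s TTI, but player $i+1$ now misses its demand, incurs the large penalty term $\alpha p_u$, and must flee to the other TTI, where it in turn ruins player $i+2$, and so on around the triangle forever --- a rock--paper--scissors chase in which each move is a strict improvement for the mover yet the profile at step $k$ recurs at step $k+6$. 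If you want to salvage your construction, you need either a third player to close the cycle of ``who hurts whom,'' or some other mechanism forcing the fleeing player to land where it damages a party that will eventually damage the first player again; two players with enough TTIs to separate cannot provide this.
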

\begin{sketch}
Consider a scenario with $2$ TTIs and $3$ base stations, each of them associated with exactly $ 1$ user. 
For each player $i$, the only user can be scheduled in one or both TTIs, so the valid action space $\mathbb{S}_i$ is defined as follows:
\vspace{-2mm}
\begin{align*}
&\mathbb{S}_1 = \{ \{u_1,t_1\};\{u_1,t_2\};\{(u_1,t_1),(u_1,t_2)\} \}, \\
&\mathbb{S}_2 = \{ \{u_2,t_1\};\{u_2,t_2\};\{(u_2,t_1),(u_2,t_2)\} \}, \\
&\mathbb{S}_3 = \{ \{u_3,t_1\};\{u_3,t_2\};\{(u_3,t_1),(u_3,t_2)\} \}.
\end{align*}
\vspace{-2mm}

Let assume a traffic demand $D_u = 5$ units per ABSF cycle  (any unit can be used, e.g., Kbytes)  and in a TTI a user obtains the number of traffic units described in the following table:

\begin{center}
  \begin{tabular}[c]{| l || c | c | c | c |}        
    \hline
    		& alone	& with $\{u_{i+1}\}$	& with $\{u_{i+2}\}$	& all		\\ \hline
    $u_i$	& $5.55$	& $5.11$			& $2.73$ 			& $2.51$	\\ \hline
  \end{tabular}
\end{center}
In the table above, $i \in \{1,2,3\}$, and user indexes $x=i+1$ or $x=i+2$ have to be computed as shifts on a  cyclically extended list of indexes. In this example, we also assume a penalty $p_u=0.1$ with $\alpha=1000$.

Now we consider the sequence of actions taken by each player, described in Table~\ref{tab:oscillations}.
Whenever a player $i$ chooses a new action at step $k$, she uses the best response at that step, which affects the traffic received by other users. In particular, player $i$ causes player $i+1$ to incur a penalty because of low throughput ($2.73$ units in a TTI and $0$ in the other). So, the affected player will switch scheduling TTI and get $5.11$ units of traffic, which is enough to satisfy $D_u$ with a single TTI, although this action will trigger the next player in the list to do the same.  
This leads to a cycle of equal actions, such as actions at step $k$ and actions at step $k+6$. Hence, playing arbitrary best responses do not necessarily converges to a Nash equilibrium
in distributed inelastic games.
\qed
\end{sketch}

\begin{table*}[!ht]
	\footnotesize
  	\centering    
    \caption{Example of dynamics of game states for a Distributed Inelastic Game $\Gamma$ by adopting Best Response. Base stations play sequentially: moves are indicated by using boldface fonts for the corresponding scheduling strategy and cost function.}
    \vspace{-2mm}
  \begin{tabular}{|c|c|c|c|c|c|c|}
	\hline   
     & step $k-1$ & step $k$ & step $k+1$ &step $k+2$ &  \dots & step $k+6$\\
    \hline   
    BS 1 & $\{u_1,t_1\}, f_1=1$ & $\{u_1,t_1\}, f_1=\textit{101}$ & $\boldsymbol{\{u_1,t_2\}}, \boldsymbol{f_1=1}$ & $\{u_1,t_2\}, f_1=1$ & \dots & $\{u_1,t_1\}, f_1=\textit{101}$ \\      
    BS 2 & $\{u_2,t_2\}, f_2=1$ & $\{u_2,t_2\}, f_2=1$ & $\{u_2,t_2\}, f_2=\textit{101}$ &  $\boldsymbol{\{u_2,t_1\}}, \boldsymbol{f_2=1}$ & \dots & $\{u_2,t_2\}, f_2=1$ \\ 
    BS 3 & -  & $\boldsymbol{\{u_3,t_1\}}, \boldsymbol{f_3=1}$ & $\{u_3,t_1\}, f_3=1$ & $\{u_3,t_1\}, f_3=\textit{101}$ &  \dots & $\boldsymbol{\{u_3,t_1\}}, \boldsymbol{f_3=1}$ \\
    \hline
  \end{tabular}
  \label{tab:oscillations}
  \vspace{-5mm}
\end{table*}

While the asymmetric scenario considered in the above proof is quite unlikely in realistic LTE-Advanced environments, the theorem does nonetheless point out that the game $\Gamma$ may not converge in some critical scenarios, which may create applicability problem if not properly fixed.

However, using a particular class of best responses leads to an equilibrium. This particular best response set consists in selecting among all possible best responses only those which just add or remove at most one $(u,t)$ pair, to the action of the previous step. We call such set as \emph{Single-step} set, $\mathbb{S}_i^{SS}$, and define it formally as follows. Starting from any action $S_i^{(p)}$ taken at the previous step $p$, $\mathbb{S}_i^{SS}(S_i^{(p)}) = \{ S \in \mathbb{S}_i : (|S \setminus S_i^{(p)}| \leq 1) ~ \vee ~ (|S_i^{(p)} \setminus S| \leq 1) \}$, where the $\vee$ symbol is the OR operator. Now we can define a Single-step Best Response (SSBR) move:
\begin{definition}[SSBR]
At step $k$, the Single-step best response (SSBR) $\hat{S_i}^{k}$ of player $i$ is defined as a best response action $S_i^{*(k)}$ such that $S_i^{*(k)} \in \mathbb{S}_i^{SS}(S_i^{(k-1)})$.
\end{definition}

The above definition states that player $i$ will play her single-step best response by taking into consideration her action played at the previous step and ($i$) removing one of the (user,TTI) pair, ($ii$) adding just one additional (user,TTI) pair, or ($iii$) following the previous action (if the cost function is minimized for that particular action). 

In order to prove that the convergence is guaranteed by following the SSBR approach, we next introduce the concept of action profile. Given a state of the game $\Gamma$ at a particular round, the action profile $\sigma$ is the set of actions played by each player in that round. When a player changes her action, the action profile is updated. In the following, we define a particular action profile, namely \emph{saturation action profile}.
\begin{definition}
The saturation action profile is defined as an action profile $\overline{\sigma} = [S_1,...,S_N]$ belonging to a set of saturation action profiles $\Sigma^{SAT}$, $\overline{\sigma} \in \Sigma^{SAT}$, where each player's action $S_i$ either ($c.i$) returns a cost function with a zero penalty, or ($c.ii$) occupies all available $W$ TTIs with a non-zero penalty, or ($c.iii$) has a non-zero penalty and does not occupy every slot, but free slots provide a null contribution to the cost function.
\end{definition}

Assuming that the players play their SSBR for a game $\Gamma$, we can now formulate the following lemmas, whose formal proofs are reported in the Appendix.

\begin{lemma}\label{Theo:convergence}
Given that the players' actions belong to whatever action profile $\sigma$, after a finite number of single-step best responses, all players' actions will belong to a saturation action profile $\overline{\sigma}$.
\end{lemma}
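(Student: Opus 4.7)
My plan is to split the argument into (a) a local statement that every SSBR move places the moving player into one of the three saturation conditions with respect to the current opponents' actions, and (b) a global monovariant argument that drives the joint dynamics into $\Sigma^{SAT}$ in finitely many steps.

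For (a), I would argue by contradiction. Suppose after player $i$'s SSBR move her action $S_i$ violates all of $(c.i)$--$(c.iii)$. Then $S_i$ has strictly positive penalty for some user, uses strictly fewer than $W$ TTIs, and admits some free TTI $t$ in which a user $u\in\mathcal{U}_i$ can be scheduled with a positive rate $R^m$ (i.e.\ the SINR constraint in Problem~\texttt{GBR-DISTR} is satisfied at least for the lowest MCS). The candidate action $S_i\cup\{(u,t,m)\}$ lies in the single-step neighborhood $\mathbb{S}_i^{SS}(S_i^{(p)})$ and changes $f_i$ by $1-\alpha\cdot\min(R^m,p_u)$, which is strictly negative for the large $\alpha$ regime in which the problem is posed ($\alpha\cdot R^m>1$ for every valid MCS). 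This contradicts the best-response property of SSBR inside the single-step neighborhood, so the resulting action must satisfy one of $(c.i)$, $(c.ii)$, $(c.iii)$ relative to the current $S_{-i}$.

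For (b), since the joint strategy space is finite, it suffices to exhibit a monovariant whose strict decrease characterizes the moves that leave the saturation set. A natural candidate is the lexicographic potential $\Phi(\sigma)=(P(\sigma),T(\sigma))$, where $P(\sigma)=\sum_u p_u$ is the total penalty and $T(\sigma)=\sum_i |S_i|$ is the total TTI usage, ordered lexicographically over a discrete, well-ordered, lower-bounded range. The intended reading is that every SSBR move strictly improving $f_i$ is either an ``add'' move that shrinks some penalty by more than $1/\alpha$ (hence decreases $P$), or a ``remove'' move of a pair whose penalty compensation has become smaller than $1/\alpha$ (hence leaves $P$ unchanged and decreases $T$). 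Combined with part (a), any profile outside $\Sigma^{SAT}$ admits a player whose next SSBR strictly decreases $\Phi$, so after finitely many moves the dynamics reach a profile on which no SSBR move further decreases $\Phi$, which, by the contrapositive of the argument in (a), means every player is simultaneously in local saturation, i.e.\ $\sigma\in\Sigma^{SAT}$.

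The hard part will be the interference cross-coupling: an ``add'' move by $i$ raises the interference seen by some neighbor $j$, which may increase $j$'s penalty and thus temporarily push the first coordinate of $\Phi$ in the wrong direction before $j$'s next move. I expect the cleanest way to handle this is to refine the monovariant to also track the set $A(\sigma)=\{(i,t):\text{BS }i\text{ is active in TTI }t\}$ of globally active cells. Each single-step transition changes $|A(\sigma)|$ by at most one, and by part (a) every activation by $i$ corresponds to a strict reduction of $i$'s own penalty, so the aggregated penalty-reduction budget across players bounds the total number of activations that can ever occur (each activation consumes a positive quantum from the discrete penalty ladder). Once $A(\sigma)$ stabilizes, no further disruption of any player's local saturation can be triggered, and a final finite round of SSBR moves brings every player into local saturation and the profile into $\Sigma^{SAT}$; this step is where the technical care is needed, and it is the point I would expect to occupy the bulk of the appendix proof.
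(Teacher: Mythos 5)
Your part~(a) is sound and matches, in substance, the case analysis the paper performs at the start of its proof: an SSBR that leaves a player with positive penalty, free TTIs, and at least one free TTI offering a nonzero rate would be dominated by the corresponding single-pair addition, so every player's post-move action is locally in one of $(c.i)$--$(c.iii)$. The gap is in part~(b), and it is the one you yourself flag without resolving. The lexicographic potential $\Phi=(P,T)$ is not a monovariant: an ``add'' move by player $i$ raises the interference seen by a neighbour $j$, which can strictly increase $j$'s penalty and hence the first coordinate $P$, before $j$ has moved at all. Your proposed repair --- bounding the total number of activations by an ``aggregated penalty-reduction budget'' --- is circular: each activation by $i$ does consume a quantum of $i$'s own penalty, but it simultaneously \emph{replenishes} the penalties of $i$'s neighbours, so the global budget is not monotonically consumed and does not bound the number of activations. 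As written, nothing in your argument rules out a perpetual cycle in which players alternately acquire and shed penalties (this is exactly the kind of oscillation exhibited in the proof of Theorem~\ref{theo:oscillations} for unrestricted best responses), so termination is not established.

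The paper closes this loop with a different quantity. It partitions players into the set $\mathcal{P}$ with positive penalty and its complement $\mathcal{\bar{P}}$, and observes that non-saturation forces a perpetual, balanced flow of players between the two sets. It then tracks, for a single oscillating player, the \emph{number of occupied TTIs} over one full cycle $\mathcal{\bar{P}}\rightarrow\mathcal{P}\rightarrow\mathcal{\bar{P}}$: while in $\mathcal{P}$ the player adds exactly one pair per round (that is what SSBR permits and what a positive penalty demands), plus one at the entry transition, whereas while in $\mathcal{\bar{P}}$ she removes \emph{at most} one pair per round over the same average sojourn length. The balance is a strict net increase of at least one TTI per cycle, and since per-player TTI usage is bounded by $W$, only finitely many cycles can occur; the flow must stop, which by the first part of the argument yields a saturation profile. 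In short, the terminating monovariant is per-player TTI occupancy along oscillation cycles, not any penalty-based potential; your instinct to ``refine the monovariant to track the active set'' points in the right direction, but the counting argument that makes it work is the add/remove rate asymmetry of SSBR, which your proposal does not exploit.
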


\begin{lemma}\label{Theo:sat_point}
At a certain point in time, given that the actions selected by any player in the system belong to a saturation action profile $\overline{\sigma}$, 
if each player chooses a single-step best response, the game will converge to a Nash equilibrium.
\end{lemma}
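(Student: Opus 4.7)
The plan is to prove Lemma~\ref{Theo:sat_point} by exhibiting a monotone potential that strictly decreases on every cost-improving SSBR move executed inside the saturation set, and then arguing that the terminal profile is a Nash equilibrium of $\Gamma$. The first step is a case-by-case classification of the SSBR moves that can strictly improve a player's cost when the profile lies in $\Sigma^{SAT}$. A (c.i) player has zero penalty, so adding a $(u,t)$ pair increases $|S_i|$ while leaving $\alpha\sum_{u}p_u$ at zero and is never improving; only removals are candidates. A (c.ii) player already occupies all $W$ TTIs so additions are infeasible, leaving only removals. A (c.iii) player has non-zero penalty but free slots with null marginal contribution by assumption, so at the very instant $\overline{\sigma}$ is entered no additive SSBR strictly reduces $f_i$ either. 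Consequently every \emph{first} SSBR executed from $\overline{\sigma}$ is a removal.

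The second step proposes $\Phi(\sigma)=\sum_{i\in\mathcal{N}} f_i(S_i,S_{-i})$ as the potential. When player $i$ removes a pair, her cost strictly decreases by the BR property, and because the interference in every TTI can only shrink, every other $f_j$ (evaluated at the unchanged $S_j$) can only weakly decrease; hence $\Phi$ strictly decreases. The subtle case is an \emph{induced} addition by a (c.iii) player after some interferer has vacated a TTI: a previously null slot may become profitable and the add may raise some $f_j$ by forcing another user's SINR below its MCS threshold. To absorb this case I would either tighten $\Phi$ to a lexicographic refinement, or pair each induced add injectively to a strictly earlier TTI-aligned removal by one of the (c.iii) player's interferers, showing that the compound move ``prior removal followed by induced add'' is itself cost-decreasing in aggregate. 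Combining these refinements keeps the potential strictly decreasing on every SSBR, and because the potential lives in a finite lattice that is bounded below, the dynamics must halt.

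The third step upgrades the halting profile into a Nash equilibrium. At termination no player has an improving SSBR; for any candidate unrestricted BR deviation of player $i$, I would decompose it into a sequence of single add/remove hops along $\mathbb{S}_i$ and show that $f_i$ is strictly improved on at least one hop of that sequence, producing an improving SSBR and a contradiction. The main obstacle I expect is the induced-add bookkeeping of the second step: tying each (c.iii) addition to the specific interference relief that caused it, in a way that avoids double counting across players and TTIs, is precisely the reason SSBR (unlike the unrestricted BR used in Theorem~\ref{theo:oscillations}) converges, and making this pairing argument watertight is where the technical weight of the proof lies.
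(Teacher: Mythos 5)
Your proposal follows the same basic route as the paper's proof---a case analysis over the three saturation conditions ($c.i$/$c.ii$/$c.iii$) followed by a monotone-decrease argument---but it stops short of being a proof at exactly the two places where the technical weight lies, and both are left as announcements (``I would either\ldots or\ldots'', ``I would decompose\ldots and show\ldots''). First, the induced-add bookkeeping: you correctly observe that after an interferer vacates a TTI, a $c.iii$ player's previously null slot can become profitable, and that the resulting addition can push another user's SINR below its MCS threshold and raise some $f_j$; but neither the lexicographic refinement of $\Phi=\sum_i f_i$ nor the injective pairing of each induced add to an earlier removal is actually constructed, so the claim that the potential strictly decreases on every SSBR is unproven precisely in the case you flag as problematic. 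Second, the upgrade from ``no improving SSBR'' to ``Nash equilibrium'': the assertion that any improving deviation in the full action space $\mathbb{S}_i$ can be decomposed into single add/remove hops with at least one strictly improving hop is a nontrivial exchange-type property that is stated but not established; without it, termination of the SSBR dynamics only yields a local optimum under single-step moves, not an equilibrium of $\Gamma$.

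For comparison, the paper's proof avoids both difficulties by a per-player monotonicity argument rather than an aggregate potential: it argues that a zero-penalty ($c.i$) player under SSBR only ever shrinks $|S_i|$, so inter-cell interference is non-increasing, no other player's penalty can grow, and the profile remains in $\Sigma^{SAT}$; each $f_i$ is then non-increasing and bounded below, so the dynamics terminate. For $c.ii$ and $c.iii$ players the paper argues directly that \emph{no} deviation (not merely no single-step one) can improve the cost---$c.ii$ players cannot add TTIs and $c.iii$ players gain only null contributions by adding---so they are already best-responding. This sidesteps your induced-add case by leaning on the definition of $c.iii$ (it implicitly assumes the null-contribution property persists as the profile evolves, which is arguably the same subtlety you identified, just asserted rather than handled). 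If you want to complete your version, the missing piece is to show either that the saturation conditions are preserved along the SSBR trajectory (the paper's tack) or to finish the pairing argument; as written, the proposal is a credible plan with a correctly diagnosed obstacle, not a proof.
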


Relying on such Lemmas, we can prove the following result. 

\begin{theorem}\label{Theo:nash}
Game $\Gamma$ possesses at least one Nash equilibrium and players reach an equilibrium after a finite number of single-step best responses.
\end{theorem}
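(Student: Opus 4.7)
The plan is to derive Theorem~\ref{Theo:nash} as a direct two-step consequence of the preceding Lemmas~\ref{Theo:convergence} and~\ref{Theo:sat_point}, with the only real work being to verify that the two convergence guarantees can be chained without gap.

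First I would fix an arbitrary starting action profile $\sigma^{(0)} = [S_1^{(0)}, \dots, S_N^{(0)}]$ and let the players play their single-step best responses in any (say, round-robin) order. By Lemma~\ref{Theo:convergence}, there exists a finite index $k_1$ such that the resulting action profile $\sigma^{(k_1)}$ is a saturation profile, i.e., $\sigma^{(k_1)} \in \Sigma^{SAT}$. This disposes of the initial transient in which players may still be freely adding or removing $(u,t)$ pairs without yet hitting one of the saturation conditions $(c.i)$--$(c.iii)$.

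Next I would invoke Lemma~\ref{Theo:sat_point} with $\sigma^{(k_1)}$ as the starting point: once play is inside $\Sigma^{SAT}$, continuing to apply SSBR moves leads, after a further finite number of steps $k_2 - k_1$, to an action profile $\sigma^{(k_2)}$ which is a Nash equilibrium of $\Gamma$. Combining the two bounds yields a total of at most $k_2$ SSBR moves, establishing both (i) the existence of at least one Nash equilibrium of $\Gamma$ (namely $\sigma^{(k_2)}$, whose existence is now constructive) and (ii) the finite-time convergence claim.

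The one delicate point, which I would state explicitly to close the argument, is that the chaining is legitimate: the hypothesis of Lemma~\ref{Theo:sat_point} is satisfied at step $k_1$ regardless of how the saturation profile was reached, because Lemma~\ref{Theo:sat_point} is formulated as a statement about \emph{any} profile in $\Sigma^{SAT}$ and does not depend on the history of play before entering that set. I expect this to be the only subtle spot: making clear that Lemma~\ref{Theo:convergence} guarantees entry into $\Sigma^{SAT}$ in finite time, and that once inside, the SSBR dynamics covered by Lemma~\ref{Theo:sat_point} need not leave $\Sigma^{SAT}$ before reaching equilibrium. With this observation, no additional calculation is needed and the theorem follows.
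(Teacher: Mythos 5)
Your proposal is correct and follows essentially the same route as the paper: the paper's proof is exactly this two-step chaining of Lemma~\ref{Theo:convergence} (finite-time entry into a saturation profile) followed by Lemma~\ref{Theo:sat_point} (finite-time convergence to a Nash equilibrium from within $\Sigma^{SAT}$). Your explicit remark that the chaining is history-independent and that the dynamics remain in $\Sigma^{SAT}$ is a useful clarification the paper leaves implicit, but it does not change the argument.
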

\begin{proof}
We prove it by a constructive proof. Players start playing a game $\Gamma$. Regardless of the starting action profile $\sigma$, after playing a finite number of SSBR, the players' actions belong to a saturation action profile $\overline{\sigma}$, as stated by Lemma~\ref{Theo:convergence}. Upon all players select an action belonging to a saturation action profile, keeping choosing a SSBR, they will converge in a finite number of steps to a Nash equilibrium according to Lemma~\ref{Theo:sat_point}. Therefore, we can state that each game $\Gamma$ admits a Nash equilibrium, and players can reach such equilibrium. 
\end{proof}

The proof of the theorem is also confirmed by readily applying the SSBR to the scenario presented in the proof of Theorem~\ref{theo:oscillations}.
In that example, choosing the SSBR for all players leads to fully schedule all available TTIs for every base station.

Interestingly enough, Theorem~\ref{Theo:nash} also proves that players can easily adopt a general best response strategy during the game, with no convergence guarantees. However, if at a certain point in time, they switch to SSBR strategy, they converge to a Nash equilibrium with probability equal to $1$. Clearly, if at least one player is not playing SSBR, the game convergence is no longer guaranteed.

\section{Best-Effort traffic}
\label{s:distr_elastic}
We next present a distributed formulation of Problem~\texttt{BE}.
Similarly to what presented in the previous section for the inelastic traffic, here we formulate a distributed approach for the scheduling of elastic traffic. Also in this case, the original problem is split into several smaller instances, which are solved locally by each base station. To solve a local problem instance, a base station is provided with the activity pattern 
declared by other base stations. 

With the above information, and without explicitly forcing any additional constraint, each base station $i$ would schedule users selfishly in the entire set of $Z$ TTIs, in order to optimize the local utility. Therefore, to avoid that base stations use all available TTIs, in the distributed problem formulation we grant a single base station $i$ access to up to $M_i$ TTIs over $Z$ available TTIs; such $M_i$ value plays a key role in the distributed mechanism, as it will be clarified in Section~\ref{s:design}. 

The above description corresponds to the following instance of the local problem for base station $i$, which can be formulated as an ILP model as follows:

\vspace{1mm}
\noindent \textbf{Problem} \texttt{BE-DISTR}:
\begin{equation*}
\label{pr:be-distr}
\begin{array}{ll}
\text{maximize}  & \widehat{\eta_i} =\min\limits_{(u,t) \in \mathcal{U}_i \times \mathcal{Z}} \sum\limits_{m \in \mathcal{M}}R^m \cdot x_u^{m,t}, \\
\text{subject to} & \sum\limits_{u \in \mathcal{U}_i,m \in \mathcal{M}} x_u^{m,t} \leq 1, \qquad \forall t \in \mathcal{Z},\\
				   & \frac{P\,G_{u,i}}{N_0 + \sum\limits_{k \in \mathcal{N}: k \neq i} P\,G_{u,k} \cdot A^k_t} \geq \gamma^m \cdot x_u^{m,t},\\
				   & \hfill \forall u \in \mathcal{U}_i, m \in \mathcal{M}, t \in \mathcal{Z},\\
				   & \sum\limits_{u \in \mathcal{U}_i, m \in \mathcal{M}, t \in \mathcal{Z}} x_u^{m,t} \leq M_i,\\
				   & x_u^{m,t} \in \{0;1\}, \quad \forall u \in \mathcal{U}_i, m \in \mathcal{M}, t \in \mathcal{Z};
\end{array}
\end{equation*}
where all parameters and constraints have the same meaning as in Problem~\texttt{BE}, except for the third constraint, which limits the number of usable TTIs to $M_i$. Note that a feasible solution of Problem~\texttt{BE-DISTR} can be computed by using any available max-min scheduling heuristic (see, e.g.,~\cite{Bertsekas}).

Following the local optimization problem presented above, the interference coordination problem is solved in a distributed fashion: each base station receives as input the ABSF patterns,
solves Problem~\texttt{BE-DISTR} and provides in turn to other base stations its ABSF pattern. Other base stations update their choices depending on the new ABSF pattern, communicate back their new ABSF decisions and the process repeats. However, this process may not converge. 

{\bf Game Theoretical Analysis.} 
The fully distributed approach to best effort traffic scheduling can be analyzed by means of well-known results offered by game theory. Specifically, the distributed approach described above can be modeled as a game where base stations iteratively play in order to maximize their utility. Let us define this game as an \emph{Interference Coordination Game} $\Omega$, where each base station $i$ acts as a player. Similar to game $\Gamma$, the set of actions of each player $\mathbb{S}_i$ consists in the set of pairs (user, TTI), $(u,t) \in \mathcal{U}_i \times \mathcal{T}$, available for each base station according to constraints in Problem~\texttt{BE-DISTR}. However, the cost functions in the two problems are different.

In order to analyze the convergence of the game, we rely on the concept of \emph{Bottleneck Matroid Congestion Game} (for further details we refer the reader to~\cite{harks2013bottleneck}). 
The latter is a class of games in which resources are shared among players. The utility of each player depends on the utility of the resources she chooses and the number of players choosing the same resources: the higher the congestion, the lower the utility. In particular, the individual player utility is the minimum of the utilities of the resources chosen in her action.
Note that the bottleneck behavior comes from the max-min objective function of Problem~\texttt{BR-DISTR}.

Congestion games can be generalized in \emph{player-specific congestion games} and \emph{weighted congestion games}. In the former, every player has her own utility function for every resource. In a weighted congestion game, every player affects the other players strategies with a different weight, namely, she causes a different level of congestion. 
A thorough game-theoretical analysis of Game $\Omega$ is not a specific contribution of this work, since we have already provided it
in~\cite{secon15_sfmcb}.
Here we simply recall the relevant results of~\cite{secon15_sfmcb}, i.e.:
$(i)$
the Interference Coordination Game $\Omega$ is a Weighted Player-specific Bottleneck Matroid Congestion Game,
and $(ii)$ weighted player-specific matroid bottleneck congestion games do not exhibit the finite improvement property  when  using the best response. 
Therefore the distributed approach may not converge. Moreover, the solution of Problem~\texttt{BE-DISTR} is biased by the $M_i$ values.

In order to address these shortcomings, in the next section we propose a {\it semi}-distributed two-level mechanism where an overall supervisor guides the behavior of the distributed game,  using only resources not otherwise allocated to GBR traffic.

\section{Distributed Multi-traffic Scheduling Framework}
\label{s:framework_dms}
In this section we provide complete details on the proposed framework for adaptive interference-aware scheduling.
We name our scheme \emph{Distributed Multi-traffic Scheduling (DMS)}. DMS is based on the game theoretical framework introduced in Section~\ref{s:game_inel} and Section~\ref{s:distr_elastic}, and incorporates heuristic approaches to jointly adapt the solutions of Problem~\texttt{GBR} and Problem~\texttt{BE} to traffic changes. 
To cope with traffic and network dynamics, a practical strategy consists in periodic scheduling decisions, taken once per time horizon $W$ (e.g., every ABSF pattern), to serve inelastic traffic demands and best-effort data traffic within the ABSF pattern $W$. DMS first accommodates inelastic traffic that exhibits very stringent requirements. The remaining time portion is left for best-effort traffic requests. 

The first objective of DMS is to smartly optimize the inelastic traffic period scheduling in order to maximize the resource efficiency, while leaving more space for best-effort traffic. To this aim, DMS includes a mechanism that adaptively seeks the minimal number of TTIs to include in the inelastic traffic period $T \le W$, so as inelastic traffic is served with no penalties in the shortest possible time window. Without loss of generality, we only consider scenarios where the inelastic traffic can always be served within the W horizon, otherwise no solution could be found. Indeed, it would mean that the set of inelastic traffic demands cannot be entirely guaranteed and some of them must be reduced or rejected.

In the following we show that, with the game theoretical approach proposed, a lightweight supervisor suffices to solve the inelastic period minimization problem by leveraging a simple dichotomic search algorithm over several time horizons. Then, DMS fully exploits the remaining time portion (e.g., best-effort traffic period $W\!-\!T\!=\!Z$) to maximize the aggregate system throughput for serving best-effort traffic. This is automatically performed using a distributed mechanism without incurring in a heavy centralized channel statistics collection.
Please note that our solution is practical and implementable: the local controller is developed on the base station, e.g. the Radio Network Controller (RNC) for UMTS architecture or E-UTRAN in the LTE architecture, whereas the supervisor can be envisioned as an SDN-based controller in charge of collecting channel information and commuting simple decisions (\cite{LMR_EWSDN12}) or in Cloud-RAN networks it can be integrated in the BaseBand Unit (BBU). This is pretty in line with future networks design, wherein Software Defined Network (SDN) paradigm is fully applied to mobile networks~\cite{5gnorma}.

\subsection{Guaranteed traffic scheduling}
\label{s:framework_inel}

For inelastic traffic demands, DMS focuses on two different objectives. While the first objective is to fully accommodate the guaranteed traffic demands into the available time horizon $T$ ({\it Resource Allocation}), the second objective is to iteratively reduce the number of used TTIs in order to make an efficient use of the time resources ({\it Time Squeezing}). 

\begin{itemize}
\item{\it Resources Allocation} is completely executed into base stations, each of which is in charge of jointly scheduling local users and making ABSF pattern decisions, which are exchanged with the other base stations through the high-level supervisor (HS). Game $\Gamma$ is used for the base stations to accomplish this task in a coordinated way.
\item{\it Time Squeezing} is executed at the HS. The HS collects the traffic demand offered to the base stations and iteratively adjusts the length of the time period $T$ based on the ABSF patterns announced by the base stations at the end of game $\Gamma$, and on penalties they could have incurred.   
\end{itemize}
Practically speaking, DMS operation starts when user traffic demand changes in the cellular network, as illustrated in Fig.~\ref{fig:flowchart}. 
Initially, each base station provides the HS with its cumulative traffic demand.
The HS selects the initial time period as the one that would guarantee traffic constraints without adopting any interference coordination mechanism ($T = W$). As a consequence, the computation of assigned time resources is initially overestimated.
After that, DMS operation consists in the interaction between the two aforementioned processes: the Resources Allocation process and the Time Squeezing process.

{\bf Resources Allocation process: guaranteeing user demands.}
During Resource Allocation, the number of available TTIs $T$ is fixed.
Base stations cooperatively schedule their own users into available TTIs in order to satisfy their traffic demands. It is very important to note that the mechanism perfectly complies with the requirements of the Distributed Inelastic Game 
$\Gamma$ presented in Section~\ref{s:distr_inel}. In particular, each base station schedules its own users and communicates corresponding ABSF pattern to the other base stations. 
Each base station limits its activity and reduces the interference caused to the other base stations by reducing the number of occupied TTIs, as stated in Problem~\texttt{GBR-DISTR}. 
The process ends when a steady-state is reached, which always occurs if SSBR is enforced, as proved in Section~\ref{s:distr_inel}. 
Eventually, a steady state ABSF pattern for each involved base station is notified to the HS. In addition, the Resource Allocation process may output a set of non-zero user penalties, e.g., due to a too large traffic to be accommodated in the time horizon or to critical interference conditions. This event is promptly handled by the Time Squeezing process described next.

\begin{figure}[!t]
\centering
	\includegraphics[trim= 7mm 139mm 5mm 20mm, clip=true, width=0.55\textwidth, angle=90]{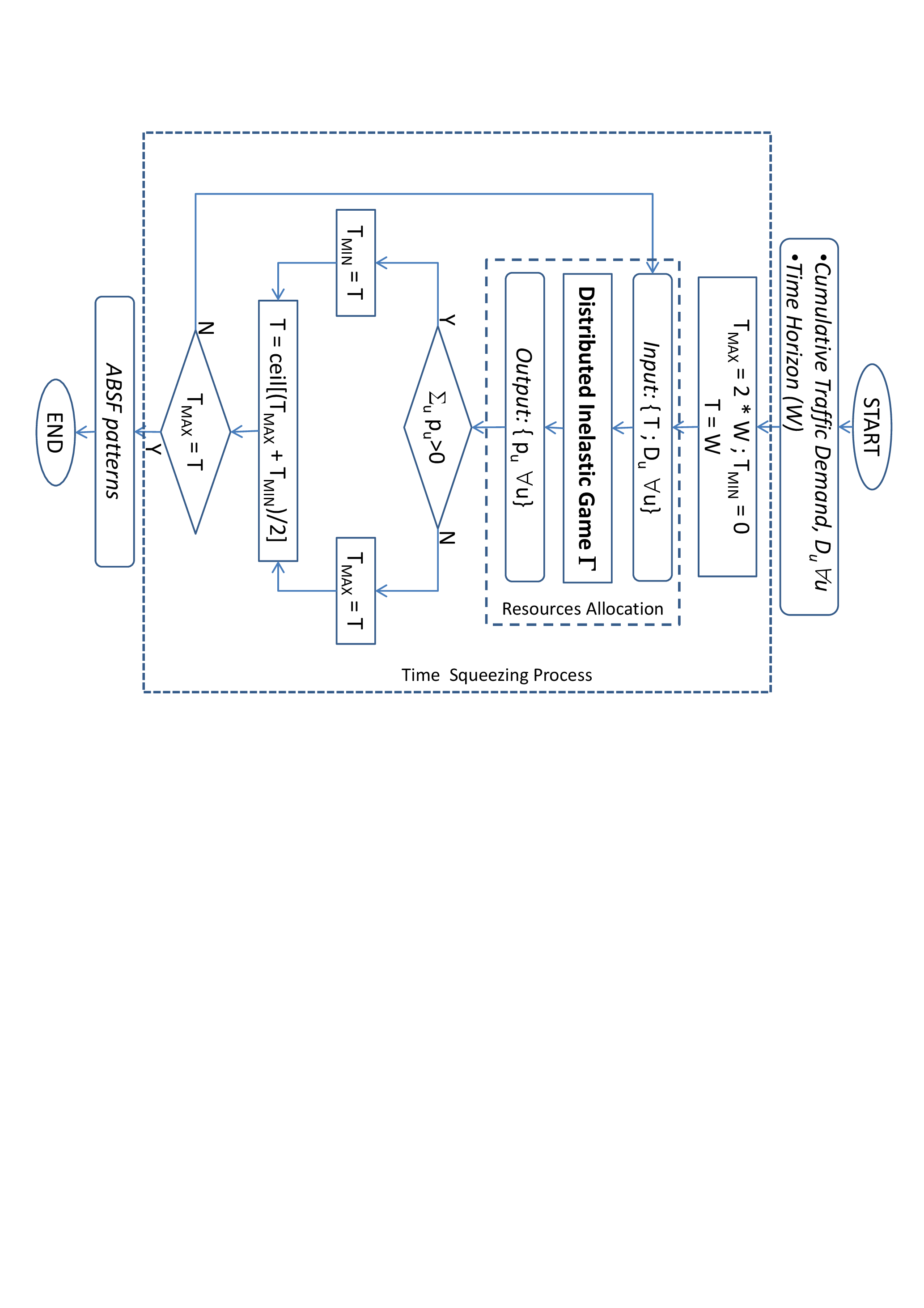}
	\caption{\footnotesize Two-level mechanism for guaranteed traffic. In the flow chart, the entire flow of the program is represented as an iteration between the execution of the Resources Allocation and of the Time Squeezing processes.}	
	\label{fig:flowchart}	
\end{figure}

{\bf Time Squeezing: adapt time period to demand.}
Time Squeezing is based on a binary search scheme, as illustrated in Fig.~\ref{fig:flowchart}. 
The initial time period $T$ is set equal to the ABSF pattern length $W$, chosen as the number of TTIs needed to guarantee the entire demand. 
Then, a binary search is used to adapt the inelastic time period. At each step of the search (e.g., at the beginning of each ABSF pattern $W$), a new value is chosen for the time period 
and it is applied by invoking the Resource Allocation process. The Resource Allocation process runs and returns ABSF patterns and penalties. If the sum of the obtained penalties is equal to zero, meaning that user traffic demands are completely satisfied and the time period may not be fully utilized, Time Squeezing reduces the time period $T$ for the next ABSF pattern. If penalties occur, the process increases the time period. The search ends if there are no penalties and no unused TTIs, or after $\log_2 W$ steps:
base stations keep using the same inelastic traffic period for next ABSF patterns, unless traffic demand changes. 

Since during the Time Squeezing process a feasible value (no penalties) of the time period $T$ is always available, the supervisor can command the base stations to apply the corresponding ABSF patterns and transmission scheduling without waiting for the convergence of the process.
Hence, although with the first applied ABSF patterns resources are not used in  a perfectly efficient way since $T$ is larger than necessary, the system is always able to guarantee the rates of inelastic demand.  However, as we will show in Section~\ref{s:perf_inelastic}, in practical scenarios it takes only a few iteration for DMS to find efficient time periods. 

\subsection{Best-effort traffic scheduling }
\label{s:design}

Building on the results of the previous section, DMS exploits the best-effort traffic period $Z$ to maximize the overall system throughput for serving best-effort traffic requests.
%
\begin{figure}[!t]
\centering
	\includegraphics[scale=0.35]{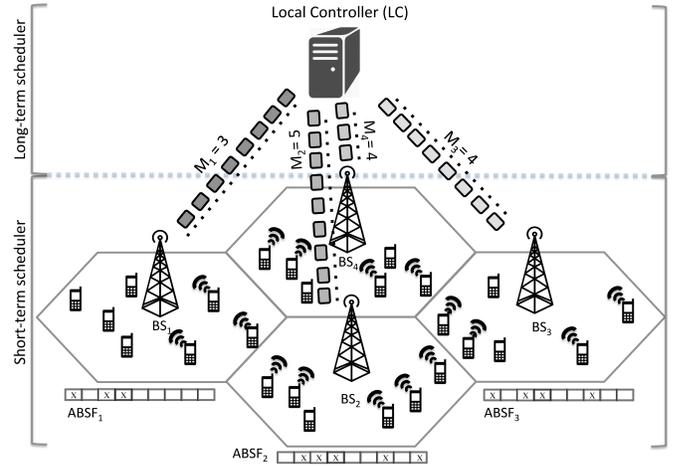}
	\caption{\footnotesize Hybrid two-level mechanism for best-effort traffic demands. In the short-term level (bottom side of the figure), game $\Omega$ is played amongst the base stations, while in the long-term level (top side) the high-level supervisor decides the number of available TTIs per base station.}	
	\label{fig:hyicic}	
\end{figure}
The DMS mechanism for best-effort traffic is depicted in Fig.~\ref{fig:hyicic}. As shown, the scheme operates at two different timescales:
\begin{itemize}
\item On a {\it long-term timescale} (in the order of seconds), HS is in charge of adjusting the $M_i$ value of each base station, where $M_i$ gives the maximum number of TTIs that base station $i$ can use to schedule its users within the time horizon $Z$ by solving Problem~\texttt{BE-DISTR}.
In addition, adapting $M_i$ allows the system to react to traffic changes.
\item At a {\it shorter timescale}, base stations play the Interference Coordination Game $\Omega$ by sequentially exchanging their scheduling decisions in terms of ABSF patterns.
\end{itemize}
Note that Game $\Omega$ is played in a sequential manner to provide converge guarantees.
The supervisor does not directly participate in the game, but it controls its convergence by limiting the number of iterations.
The challenge for the coordinator-aided approach is the design of the algorithms executed by the supervisor to ($i$) ensure convergence, and ($ii$) adjust the values $M_i$. In the following we address the design of those algorithms, which aim at driving the system behavior to an optimal state in the long run. 

{\bf Convergence control of game $\Omega$.} In order to guarantee the convergence of the game, the supervisor imposes a deadline of $\hat{Z}$ TTIs, with $\hat{Z} < Z < W$: if the game has not finished by this deadline, it is terminated by the HS.

When the game finishes before the deadline, the resulting scheduling corresponds to an equilibrium of the game, which ensures that resources are fairly shared among base stations. In contrast, when the game is terminated by the supervisor, base stations use the scheduling that they computed in the latest iteration of the game, which does not correspond to an equilibrium. Thus, in the latter case some base stations could potentially have a better scheduling (i.e., more resources) than the others. However, as shown by our results of Section~\ref{s:perf_elastic}, we have observed that in practice the game can be interrupted after only a very few iterations without negatively impacting fairness or performance in a significant manner.

The deadline $\hat{Z}$ is chosen in order to have a valid scheduling before the current period $Z$ finishes: the resulting scheduling (and the corresponding ABSF pattern) will then be used for the next period. During the game, transmissions and users are scheduled according to the result of the previous period. Note that the iterations of game $\Omega$, as it is in game $\Gamma$, do not need to be synchronized with the TTIs; they can be much faster, allowing for more than $Z$ iterations within $Z$ TTIs. Indeed, the execution of one iteration only requires passing the ``current'' ABSF patterns from one base station to another. As shown in Section~\ref{s:perf_elastic}, deadline $\hat{Z}$ can be chosen in the range $[|\mathcal{N}|, |\mathcal{N}|^2]$.


{\bf Dynamic adjustment of TTI bounds $M_i$.}
One critical aspect for the performance of the proposed mechanism is the setting of the $M_i$ parameters, which give the maximum number of non-blank TTIs available to each base station. Indeed, if the $M_i$ values are too small, performance is degraded because, even if base stations can be scheduled one at a time with low interference, the number of TTIs available for transmitting can be too small to accomodate all users. Conversely, if the $M_i$ values are too large, performance is degraded as a result of too many base stations scheduled together and interfering each other. Thus, performance is maximized when the $M_i$ parameters are optimally set to values that are neither too large nor too small. In the rest of this section, we design an adaptive algorithm that follows an \emph{additive-increase multiplicative-decrease} (AIMD) strategy~\cite{aimd_perf} to find the optimal $M_i$ setting.

In addition to optimally setting $M_i$ to improve the performance of the network, the adaptive algorithm also aims at dynamically adjusting the $M_i$ configuration to follow the changes in traffic and interference. From this perspective, the adaptive algorithm is a long-term process.
In contrast, the distributed game is a short-term process played once per each period of $W$ TTIs. This implies that the duration of the period $W$ cannot exceed a few hundreds frames, which corresponds to a few seconds during which traffic and average channel conditions remain practically unchanged.

From a high level perspective, the algorithm works as follows. At the end of each period of $W$ TTIs, i.e., after the BE traffic serving period $Z$, the supervisor gathers from the base stations the performance resulting from the $M_i$ values (and the corresponding ABSF patterns) used during the period. 
The metric used to represent the performance of a base station in terms of elastic traffic is given by the average quantity of traffic served per each of its users in the interval $Z$. If we denote by $c_{u,t}$ the traffic served for user $u$ in TTI $t$, the metric defined above is simply expressed as follows:\footnote{Note that, since user allocation is carried out according to Problem~\texttt{BE-DISTR}, the max-min objective tends to assign rates with limited variance; as a consequence, the average user rate and the rate of the worst-off user are likely to be similar.}
\vspace{-0.5em}
\begin{equation}
\eta_i = \frac{1}{|\mathcal{U}_i|}\sum\limits_{(u,t) \in \mathcal{U}_i \times \mathcal{Z}} c_{u,t} , \quad \forall i \in \mathcal{N}.
\vspace{-2mm}
\end{equation}

The supervisor then uses the sum of the individual performance metrics, $\eta = \sum_{i \in N} \eta_i$, to keep track of the global system performance and drive $M_i$ to the setting that maximizes $\eta$. The algorithm to find such $M_i$ setting follows an AIMD strategy: the $M_i$ values are linearly increased as long as performance is improved, and, when performance stops improving, then the $M_i$ values are decreased multiplicatively. After each update of the $M_i$ values, these are distributed to the base stations and used 
in the following period (i.e., the following iteration of game $\Omega$).

The specific algorithm executed to calculate the new set of TTI bounds $M_i$ is described in Algorithm~\ref{alg:long-term}. Each iteration of the algorithm is identified by an index $k$. At the initial step ($k = 0$), the supervisor initializes the system performance metrics $\eta$ to $0$ and assigns the initial TTI bounds $M^*_i~=~\lceil Z/\vert \mathcal{N} \vert \rceil$ for every base station. This initial $M^*_i$ setting has been chosen to allow base stations to schedule their users in disjoint portions of the period, which helps the convergence of the algorithm in case of very high mutual interference between all base stations. The $M^*_i$ also provides a lower bound for $M_i$.

\begin{algorithm}[!h]
\caption{\small{\textbf{Resource Sharing Algorithm:} Adaptive algorithm to dynamically design $M_i$. Called at the end of $(k-1)^{th}$ ABSF pattern}} 
\label{alg:long-term}
\algsetup{indent=1em}
\begin{footnotesize}
\begin{algorithmic}
\STATE {\bf Input: }$\mathcal{N}, Z, M^*_i, \eta^{(k-1)}$
\STATE {\bf Initialization: } $\eta^{(k)} \leftarrow 0; M_i \leftarrow M^*_i, \forall i \in \mathcal{N} $
\STATE {\bf Procedure}
\end{algorithmic}
\begin{algorithmic}[1]
\STATE $\mathcal{V} \leftarrow \{ \eta_i, \forall i \in \mathcal{N} \}$
\STATE Order $\mathcal{V}$ non-increasing
\STATE $\eta^{(k)} = \sum\limits_{i \in N} \eta_i$

\IF{$\eta^{(k)} > \eta^{(k-1)}$}
    \WHILE{$\mathcal{V} \neq \emptyset$}
    	\STATE e = pop($\mathcal{V}$)
    	\STATE \textbf{Consider} index $i$ of element $e$
    	\IF{$M_i^{(k-1)} < Z$}
    		\STATE $M_i^{(k)} = M_i^{(k-1)} + 1$
    		\STATE \textbf{break}
    	\ENDIF
    \ENDWHILE
\ELSE
    \WHILE{$\mathcal{V} \neq \emptyset$}
    	\STATE e = pop($\mathcal{V}$)
    	\STATE \textbf{Consider} index $i$ of element $e$
    	\IF{$M_i^{(k-1)} > M^*_i$}
    		\STATE $M_i^{(k)} = \max \left\{M^*_i;\left \lceil M_i^{(k-1)}/2 \right \rceil \right\}$
    		\STATE $\eta^{(k)} = 0$
    		\STATE \textbf{break}
    	\ENDIF
    \ENDWHILE
\ENDIF
\end{algorithmic}
\end{footnotesize}
\end{algorithm}

At each step, the supervisor collects the performance metrics $\eta_i$ from base stations and checks whether the performance of this period, $\eta^{(k)}$, has improved with respect to the previous period, $\eta^{(k-1)}$ (line 3). If this is the case, the system performance is raising and the supervisor increases TTI bounds $M_i$ as follows. The supervisor increases by $1$ unit the $M_i$ of the base station with the smallest $\eta_i$ whose $M_i$ is below $Z$ (lines 8-9). Once one $M_i$ value is increased, step $k$ of the algorithm terminates (line 10). 

If no $M_i$ can be increased, which means that all base stations are active in all TTIs, then no adjustment of the $M_i$ values is made as long as the system performance does not degrade. In case performance degrades, i.e., $\eta^{(k)}$ decreases, (line 13), the supervisor drastically reduces the $M_i$. Specifically, the supervisor looks at the base station $i$ with the largest $\eta_i$
whose $M_i$ is above $M^*_i$. It sets the new $M_i$ value of this station equal to the minimum between the half of the current $M_i$ value and the lower bound $M^*_i$ (lines 17-18). If $M_i = M^*_i$ for all $i$, no change is applied.

The rationale behind using AIMD to adjust the $M_i$ values is that, similar to what happens with Transmission Control Protocol (TCP), increasing the utilization of the system (i.e., increasing $M_i$ values) may lead to congestion (in our case, this corresponds to excessive interference), which causes user rate drops. In this case, a quick reaction is required by the supervisor to drive the system to a safe point of operation by properly adjusting TTI bounds $M_i$. Also similar to TCP, the additive increase of TTI bounds $M_i$ allows to gracefully approach the optimal utilization of the system. Furthermore, since the problem may admit more than one local maximum, using multiplicative decrease for the TTI bounds $M_i$ helps our heuristic to escape from a local maximum where the optimization function may be trapped in.

As a side comment, we point out that the proposed algorithm could accommodate different goals, such as, e.g., maximum throughput or proportional fairness, by simply replacing the function that gives the global system performance, $\eta$, by another function that reflects performance according to the objective pursued.

\subsection{DMS Control overhead}
\label{sec:mess_overhead}

In DMS, guaranteed and best-effort traffic procedures rely on the same exchanged information (ABSF patterns) with additional bits needed for BE to notify the base station performance $\eta_i$ to the supervisor.
We can identify two different {\it interfaces}: 
one between supervisor and base stations, namely $I_{C}$, and one between base stations, 
namely $I_{B}$. Counting the number of messages crossing each interface, we can estimate the required overhead carried by the backhauling/core network.

In the centralized solution, the supervisor requires message exchanges over $I_C$ only. 
In particular, per each pair ({\it user, base station}), it requires the transmission of an average channel quality indicator which can be encoded with $B$ bits as well as the (potential) incurred penalty $p_u$ with additional $B$ bits.
Then, the supervisor issues a scheduling pattern (a string of $W$ bits) to each base station. 

In the DMS mechanism, the supervisor requires the average user rate $\eta_i$ per base station over $I_C$ at the end of each game $\Omega$ for best-effort traffic only, consisting in a binary string of $B$ bits. Regarding the interface $I_B$ between different base stations, DMS needs a sequential exchange of ABSF patterns (strings of $W$ bits) during the interference coordination games $\Gamma$ and $\Omega$, until both games reach a convergence state or the convergence deadline expires.

We can therefore summarize the total load in terms of bits for each interface as reported in Table~\ref{t:overhead}.
\begin{table}
\centering
\caption{Overhead of centralized and DMS semi-distributed approaches}
\label{t:overhead}
	\footnotesize
  \begin{tabular}[c]{| c | c | c |}        
    \hline
    Interface & Centralized approach & DMS approach\\ \hline
    $I_{C}$ & $B\cdot \vert \mathcal{U} \vert\cdot \vert \mathcal{N} \vert + W\cdot \vert \mathcal{N} \vert$ & $2B \cdot \vert \mathcal{N} \vert$\\ \hline
    $I_{B}$ & $0$ & $W\cdot k\cdot \vert \mathcal{N} \vert$\\
    \hline
  \end{tabular}
\end{table}
In the table, $k$ is the number of rounds the interference coordination game plays before reaching the convergence, and $|\mathcal{U}| = \sum_i|\mathcal{U}_i|$ is the total number of users in the system. 
We can easily observe that the overhead of DMS is lower than that of the centralized 
mechanisms when the following inequality holds: 
\begin{equation}
\vert \mathcal{U} \vert > 1 + \frac{W}{B} (k-1) \cong \frac{W \vert \mathcal{N} \vert^2}{B}, 
\end{equation}
where we have considered that the number of rounds $k$ in the worst case is a function of $\vert \mathcal{N} \vert$ 
(i.e., at most $k=|\mathcal{N} \vert^2$ iterations are enough to converge, when convergence exists, as proven mathematically in~\cite{harks2013bottleneck} 
and empirically shown in Section~\ref{s:perf_inelastic} and Section~\ref{s:perf_elastic}) and both $W$ 
and $\vert \mathcal{N} \vert$ are (much) greater than 1. Therefore, our approach is convenient as soon as the number 
of users exceeds a threshold that depends on $W$, $B$, and $\vert \mathcal{N} \vert$ (i.e., the threshold is $O \left ( W \vert \mathcal{N} \vert^2 \right )$).
For example, in an (sub-)urban environment with $W \!=\! 70$, $\vert \mathcal{N} \vert=7$ and double precision floating point values, $B=64$, DMS results convenient with as few as 54 users or more, while in a dense-urban environment with $\vert \mathcal{N} \vert=28$, our approach exhibits a practical implementation starting with $\sim$900 users in the entire network. Those values are pretty low, revealing how \textit{our approach drastically reduces the signaling overhead for realistic cellular network sizes}.

Although we propose a high-level technology-independent analysis, we point our that DMS framework is compatible with the SDN paradigm~\cite{SDN4_crowd}. Moreover, interfaces $I_C$ and $I_B$ may be implemented using, e.g., the standard $X2$ interface~\cite{absf}. Since only ABSF patterns, penalty indicators and experienced rates need to be exchanged in addition to the measure of traffic demands received by each base station, the X2 interface could be adopted as Southbound interface in an SDN implementation of DMS with simple modifications.

\section{Performance Evaluation}
\label{s:results}
In this section, we use numerical simulations to show that our mechanism performs near optimally and boosts achievable rates in the  whole network, not just for topologically disadvantaged users. First, we present a simulation-based performance evaluation in which guaranteed traffic and best-effort traffic are independently served. Then, we show how DMS jointly handles both traffic types, exhibiting outstanding results. 

All simulations are carried out by means of MATLAB\textregistered\;with all parameters summarized in Table~\ref{t:sim_parameters}. Specifically, we tested two scenarios: a ``standard'' scenario with a $7$-base-station network deployed in an area of 300$\,$m$\times$500$\,$m, and a ``dense'' scenario with $28$ base stations in the same area. In both scenarios, base stations are deployed according to a hexagonal grid.\footnote{This assumption is rather common when carrying out a detailed simulation campaign with a significant number of base stations. However, it does not impact on the overall performance as our solution relies on an automated interaction between neighbouring cells, without any regards about the cell positions (such as hexagonal or PPP-based).}
The coverage of each base station is computed as a Voronoi region, assuming all base stations use the same transmission power $P\!=\!30$dBm. Although the scenario with $28$ base stations somehow includes the one with $7$---as ABSF could be used to blank all but $7$ base station in the denser scenario---we keep the two scenarios separated to analyze in a clean way the impact of both the number and the distance of interferers. Moreover, for the 7-base-station scenario, we are able to find and then compare optimal solutions with DMS solutions. Users are randomly dropped in each cell, according to a uniform random distribution. The average quality of the user channel is computed as function of the distance from the base station (according to the propagation model provided by 3GPP specifications, Table A.2.1.1-3 of TR.25.814 v7.1.0), and Rayleigh fading is considered. Based on user channel qualities, each simulated base station solves Problem~\texttt{BE-DISTR} or Problem~\texttt{GBR-DISTR} by means of a remote call to a commercial solver, i.e., IBM CPLEX OPL\textregistered. 
\begin{table} [!t]
\centering
\caption{List of Parameters for the LTE-A wireless scenarios used in the experiments [ITU UMi channel model]}
\label{t:sim_parameters}
	\footnotesize
  \begin{tabular}[c]{| c | c | c |}        
    \hline
    $\vert \mathcal{N} \vert$ & Number of Base Stations & $7$ or $28$\\ \hline
    $\vert \mathcal{U}_i \vert$ & Number of UEs per Base Station & $10-50$\\ \hline
    $W$ & ABSF Pattern Length & $70$ TTIs\\  \hline
    $BW$ & Spectrum Bandwidth & $20$ MHz\\  \hline
    $P$ & Transmitting Power & $30$ dBm\\  \hline
    $ISD$ & Inter-Site Distance & $80$ or $200$ m\\  \hline
    $N_0$ & Background Noise & $1.085 \times 10^{-14}$\\ \hline
  \end{tabular}
\end{table}

\subsection{Guaranteed traffic Management}
\label{s:perf_inelastic}
Here, we consider a variety of traffic demands and user distributions when DMS only handles guaranteed traffic requests. We show that DMS $(i)$ fully serves guaranteed user traffic demands, $(ii)$ minimizes resources used and make them available for best-effort traffic, and $(iii)$ performs close to the bound corresponding to the ideal centralized scheme, presented in Problem~\texttt{GBR} in Section~\ref{s:inelastic_problem_centr}. Finally, we show the rate of convergence of Distributed Inelastic Game $\Gamma$ proving that convergence is always achieved even in hostile cellular environments.

{\bf Time-resources performance.} 
To achieve traffic guarantees, we apply our DMS mechanism on the two scenarios described above, where all users in the network have subscribed a guaranteed bit-rate contract. For the sake of completeness we have simulated unbalanced scenarios where each base station exhibits a different GBR user demands. However, due to lack of space we have omitted this additional results, which only confirm the convenience of our approach showing near-optimal results without unveiling any critical issue.
Furthermore, we benchmark DMS against ``Centralized'', an omniscient centralized approach able to optimally solve Problem~\texttt{GBR} in the smallest possible number of TTIs, and ``Legacy'' corresponding to uncontrolled base stations using the same frequencies. 
Initially, the guaranteed traffic period is set equal to the ABSF pattern $W = 70$. Iteratively, DMS reduces the guaranteed traffic period while guaranteeing the required traffic. Presented results are averaged over $1000$ random  instances.

\begin{figure}[!t]
\centering
	\includegraphics[width=0.4\textwidth]{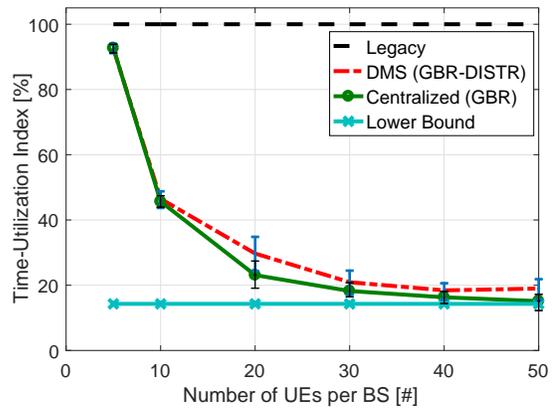}
	   \vspace{-2mm}
	\caption{\footnotesize Time-utilization index over different network density values for $4$ Mbps per GBR user and $7$ base stations. 
	}	
	\label{fig:reuse_factor}
\end{figure}

\begin{figure}[!t]
\vspace{-3mm}
\centering
	\includegraphics[width=0.4\textwidth]{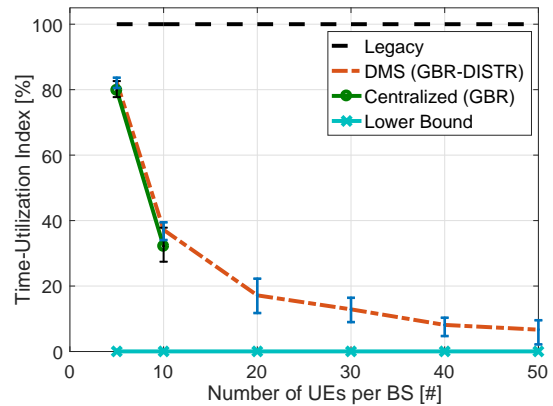}
	   \vspace{-2mm}
	\caption{\footnotesize Time-utilization index over different network density values for $3$ Mbps per GBR user and $28$ base stations. Optimal results are obtained only for few points due to the problem complexity.}	
	\label{fig:reuse_factor_dense}
\end{figure}

We study the amount of used resources under different network conditions in Fig.~\ref{fig:reuse_factor} and Fig.~\ref{fig:reuse_factor_dense}. 
They show 
the average number of TTIs used by each base station, 
normalized to the number of TTIs used in the legacy approach. The resulting value is a time-utilization index, which tells us how much each TTI is used by every base station in the network. The index ranges between $1/\vert \mathcal{N} \vert$ (marked as Lower Bound), when base stations use resources in a TDM-like way, and 1, like in the legacy scheme. The results are in line with our expectations. Considering a network with a very small user population, inter-cell interference is limited and the time-utilization index is close to its maximum ($92.72\%$ for Centralized and  $92.64\%$ for DMS). Conversely, for large user populations the overall interference grows and base stations are forced to transmit disjointly over the ABSF patterns ($45.65\%, 23.11\%, 18.23\%, 16.3\% \,\text{and}\, 15.15\%$ for Centralized and $46.49\%, 29.72\%, 20.93\%, 18.42\% \,\text{and}\, 18.99\%$ for DMS).
Fig.~\ref{fig:reuse_factor} shows that DMS is near-optimal, while the legacy approach wastes wireless-resources allowing base stations to be active all the time. Moreover, the figure confirms that DMS is not an excessively conservative approach, since it allows each BS to use slightly larger number of TTIs with respect to the operation of the optimal scheme. Fig.~\ref{fig:reuse_factor_dense} shows a similar performance in the dense scenario ($79.81\%, 32.14\%$ for Centralized and $82.31\%, 37.17\%$ for DMS). However, we could optimally solve Problem~\texttt{GBR} only for a small number of users per base station. The complexity of the centralized formulation is so high that we could not find a solution in a reasonable time for more than 10 users per base station.

In general, presented results confirm that DMS applied to guaranteed traffic substantially outperforms the legacy solution and exhibits near-optimal performance. The difference between our approach and the centralized scheme is typically very low (a few percent) and it is within $15\%$ in all cases.

\begin{table}[t!]
\caption{Study of convergence of game $\Gamma$ with different densities and and inter-site distances}
\vspace{-2mm}
\label{tab:convergence_study}
\scriptsize
\centering
\begin{tabular}{|c||c|c|c|c|}
\hline
\backslashbox{BS spacing\!\!\!}{\!\!\!\!\#Users/BS\!\!}
& $\textbf{10}$ & $\textbf{20}$ & $\textbf{30}$ & $\textbf{40}$ \\
\hline
\hline
$\textbf{100$\,$m}$ & $29.231\, (\pm 0.1)$ & $30.105$ & $\textbf{35.422}$ & $35.419$ \\
\hline
$\textbf{200$\,$m}$ & $7.111\, (\pm 0.01)$ & $10.708$ & $15.188$ & $17.495$ \\
\hline
$\textbf{500$\,$m}$ & $7.108\, (\pm 0.01)$ & $9.676$ & $15.951$ & $15.217$ \\
\hline
$\textbf{1000$\,$m}$ & $6.941\, (\pm 0.05)$ & $7.155$ & $7.244$ & $7.239$ \\
\hline
\end{tabular}
\end{table}

{\bf Convergence study of game $\Gamma$.} From the theoretical analysis presented in Section~\ref{s:distr_inel}, the Distribute Inelastic Game $\Gamma$ is guaranteed to converge by using SSBR, although it could very likely converge also by using a simple BR approach. Here, we experimentally evaluate the convergence properties of $\Gamma$ by simulating several network topologies wherein different network densities and inter-site distances are considered. In order to keep the computational cost affordable we only evaluate the 7-base-stations scenario assuming that our approach is applied to a cluster of cells for a dense scenario. Table~\ref{tab:convergence_study} summarizes the results obtained for the standard scenario in terms of $k$, the number of rounds the game needs to converge by using BR for at most $N^2$ rounds and switching to SSBR afterwards, if needed. Similar conclusions can be drawn in the very dense scenario.
The results are averaged over $1000$ simulations per each single case. It is important to note that $k$ decreases with the distance between BSs. This is due to the nature of inter-cell interference. The closer base stations are placed, the more the interference grows and the higher the number of rounds needed for $\Gamma$ to converge.
The dependence of $k$ on the user population is not so marked, although it is possible to state that $k$ roughly grows with the population size. 
Interestingly, even for dense scenarios, the  number of rounds $k$ that we have observed ranged from about $N$ to a maximum value which is lower than $N^2$ while the average reported in the table is of the same order of $N$.

Moreover, it is worth pointing out that only for $3$ cases out of $1000$ simulations the game $\Gamma$ did not reach the convergence by using the BR strategy, thus forcing players to use the SSBR strategy, as explained in Section~\ref{s:distr_inel}. SSBR is supposedly slow to reach convergence if used from round 1, however, it can readily lead to game convergence in a few rounds after BR has been played a few times. Specifically, in our simulation, 
about $N^2$ rounds with BR, followed by at most $N$ rounds with SSBR, were sufficient to reach convergence in all cases. 
These results confirm not only that convergence can be always achieved, but also that the BR strategy typically ensures the game convergence, with no need to force base stations to use the SSBR strategy since the beginning. In practice, we suggest to use the BR strategy during the first $N^2$ rounds of the game, and, if the game did not converged, switch to the SSBR strategy at round $N^2+1$. With the above, the entire game will converge in a number of rounds $O(N^2)$. 

\subsection{Best-effort traffic Management}
\label{s:perf_elastic}

Once guaranteed traffic is properly accommodated within the ABSF pattern $W$, the unused TTIs are fully assigned for serving best-effort traffic. In the next set of simulations, we show how DMS handles the best-effort traffic given a fixed number of available TTIs $Z<W$. 

First, we benchmark DMS against the optimal solution obtained by solving Problem \texttt{BE} by means of an ILP solver. 
Additionally, we compare DMS to a traditional ``Frequency Reuse 3'' scheme, in which the available band is split into three orthogonal sub-bands, and to the already mentioned ``Legacy'' case. 
Finally, for the sake of completeness, we compare DMS with two existing approaches based on power control schemes, showing how DMS can achieve high network performance at a bargain price of complexity.
In the first scheme, namely Utility-Based Power Control (UBPC)~\cite{shroff_ubpc}, base stations are allocated in all available TTIs by properly tuning the transmitted power to reduce interference. The algorithm suggested in~\cite{shroff_ubpc} maximizes the user net utility by ensuring that the signal-to-noise-ratio of each transmission is greater than a minimum threshold $\gamma_i$ (in our simulations we assume $\gamma_i$ as the minimum MCS with nonzero rate). 
While UBPC provides a rigorous centralized solution for the power allocation problem at the expense of a huge amount of information exchanged, a second power control scheme recently developed, namely REFerence based Interference Management (REFIM)~\cite{refim}, proposes a low-complex distributed scheme by exploiting the notion of reference user (e.g., the user with the worst channel condition belonging to the surrounding cells). 
Although this abstraction leads to a drastic reduction of the control signal overhead and results in a practical implementation of the power control solution, it exhibits a conservative behaviour.

{\bf Utility and fairness performance.} We start evaluating the system utility $\widehat{\eta}$, which, according to the formulation of Problem \texttt{BE}, is the sum of minimum user rates experienced at each base station. Fig.~\ref{fig:dynamics} shows $\widehat{\eta}$ for the above-described $28$-cell scenario when different schemes are applied. 

\begin{figure} [!t]
\centering
	\includegraphics[width=0.4\textwidth]{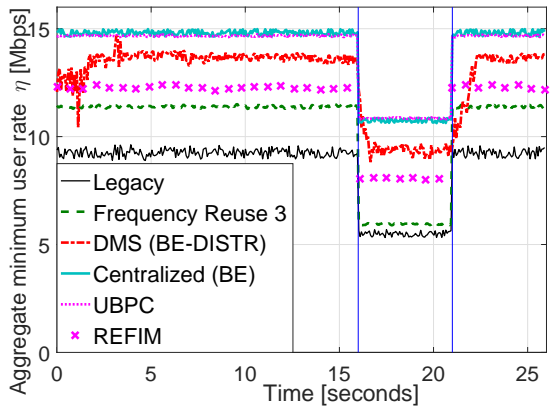}
		\caption{\footnotesize Dynamic behaviour of DMS for best-effort traffic applied to a dense scenario. In the first part, the scenario includes $\vert \mathcal{N} \vert = 28$ base stations, $|U_i| = 20$ users and $Z = 140$ TTIs. A network change increases the number of users up to $|U_i| = 30$ users and then back again to $|U_i| = 20$ users. 
}
	\label{fig:dynamics}
\end{figure}

\begin{figure} [!t]
\centering
	\includegraphics[width=0.4\textwidth]{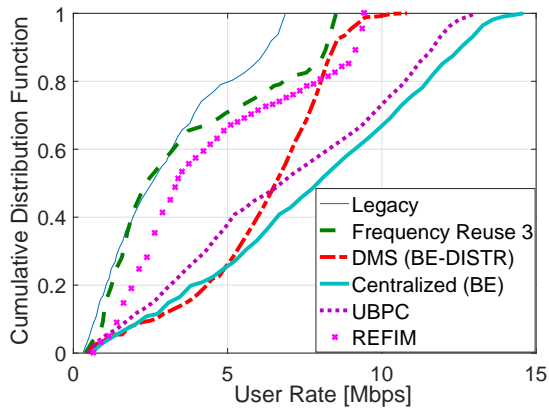}
	\caption{\footnotesize CDF of average user rates with 28 base stations and 20 BE users per base station. The time horizon is set to $Z = 140$ TTIs.}
	\label{fig:user_cdf}
\end{figure}
 
Due to the adaptive nature of our algorithm, DMS shows a dynamic behavior. Specifically, it takes a few seconds for DMS to reach its stable operating point, after which it follows quite fast the evolution of channel and traffic conditions. In particular, at time $t \! = \! 16$ s, the number of users in the network increases by $50\%$, but it takes only a fraction of a second for DMS to adapt. The same happens at $t \! = \! 21$ s, when the number of users returns to the initial value. In general, DMS largely outperforms the Legacy scheme and achieves significant gain over Frequency Reuse 3, being much closer to the optimal performance. 
Notably, after the initial adaptation period, the utility achieved by DMS is more than $90\%$ of the one achieved with the optimal solution for Problem~\texttt{BE}.
REFIM and UBPC results show the real potentials of power control schemes. UBPC can very closely match the performance of the optimal solution without power control, although it requires higher complexity in terms both of execution and device hardware. REFIM, despite being a conservative low-complexity scheme, shows a better performance than Frequency Reuse 3. Our DMS approach perfectly lies in between an unviable efficient power control scheme and a practically doable distributed power control solution. Therefore, considering the complexity of power control schemes, DMS can provide a practical and advantageous trade-off between performance and complexity.
 
Besides utility $\eta$, we want to evaluate the fairness achieved by the different schemes. To this aim, Fig.~\ref{fig:user_cdf} presents the CDF of achieved user rates (averaged over the time horizon $W$). The figure clearly shows that the optimal centralized solution and UBPC can achieve higher user rates, while Legacy, Frequency Reuse 3, and REFIM cannot fully support this dense network scenario. DMS provides intermediate user rates, however, it has the strong advantage of presenting them in a compact interval of possible values, which is symptom of fairness. Moreover, although the dense scenario causes high congestion, DMS can provide a boost to the critical rates experienced by the users. For instance, with  90\% probability, DMS guarantees 2.1 Mbps per user, UBPC guarantees 1.8 Mbps per user, REFIM guarantees 1.3 Mbps per user, while Frequency Reuse 3 only guarantees 0.8 Mbps.

{\bf Convergence study of game $\Omega$.} A key feature of DMS is its ability to quickly adapt to network changes. Such a feature relies on a fast ABSF pattern computation, which follows the rule of the Interference Coordination Game $\Omega$, and on convergence guarantees.
\begin{figure} [!t]
\centering
	\includegraphics[width=0.4\textwidth]{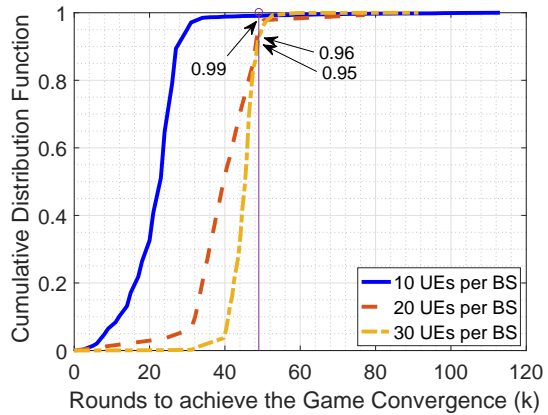}
	\caption{\footnotesize CDF of number of rounds needed for game convergence with 28 base stations and different user populations.}
	\label{fig:game_cdf}
  \end{figure}
Interestingly, Fig.~\ref{fig:game_cdf} illustrates the CDF of the number of rounds needed to converge for different user populations. Please note that in the dense scenario DMS approach is distributively applied to a cluster of $7$ base stations in order to keep the computational burden feasible. Therefore, as shown in the figure, the majority of the games $\Omega$ converge much before $\vert \mathcal{N} \vert^2$ rounds (vertical line in the figure), and very few cases do not converge at all. Simulations also show that the rate improvement beyond the $\vert \mathcal{N} \vert^2$-th round is very limited. Therefore,  we can conclude that reasonably high utilities can be achieved by stopping the game after a number of rounds comprised between $\vert \mathcal{N} \vert$ and $\vert \mathcal{N} \vert^2$.  

Overall, when best-effort traffic is involved, our results show that DMS not only achieves near-optimal results according to the definition of utility given in the formulation of Problem~\texttt{BE}, but also achieves high levels of fairness, and significantly boosts average rates in the entire cellular network. 

\subsection{Multi-traffic Management}
\label{sec:jointly_dms}

We have carried out an empirical evaluation to show how DMS can minimize the time needed to fully serve guaranteed traffic and simultaneously optimize the scheduling of best-effort traffic demands in the remaining part of the ABSF pattern. Fig.~\ref{fig:dms_merged} and Fig.~\ref{fig:dms_dense_merged} show two representative examples of obtained results. We consider the DMS mechanism for guaranteed traffic always applied, while alternative approaches are evaluated for best-effort traffic. Indeed, since the mechanism for guaranteed traffic has been exhaustively evaluated in Section~\ref{s:perf_inelastic}, the goal of these experiments is to show that GBR and BE algorithms can dynamically interact to optimize the exploitation of available resources.

In Fig.~\ref{fig:dms_merged}, we consider a scenario with $7$ base stations and a total of $70$ GBR users each requiring $4 Mbps$, $140$ BE users, and a time horizon of $W=70$ TTIs. At $t=119 s$, the scenario changes by varying the GBR users' demands from $4 Mbps$ to $5Mbps$. In Fig.~\ref{fig:dms_dense_merged}, instead, the scenario is much denser, having $28$ base stations, $560$ GBR users with a $2 Mbps$ demand, and $140$ BE users. The time horizon is $W=140$ and users' demands change from $2 Mbps$ to $2.5 Mbps$. The different parameter is required to make the traffic demand feasible for the available network capacity in such a dense condition. Both figures report four curves: ``GBR (DMS)'' refers to the total throughput of guaranteed-traffic users computed according to DMS guaranteed traffic mechanism, ``BE (DMS)'' refers to the total throughput of best-effort users considering both the current DMS allocation for guaranteed traffic and the current iteration of BE mechanism, ``BE (Legacy)'' considers the current DMS allocation for guaranteed traffic and the Legacy approach for the best-effort traffic, ``BE (Centralized)'' still considers the current DMS allocation for guaranteed traffic, while providing the centralized optimum of Problem~\texttt{BE} for the best effort traffic.

The figures show that: $(i)$ guaranteed traffic is always served providing exactly the requested throughput and $(ii)$ DMS finds the optimal number of TTIs to fully accommodate the guaranteed traffic after few seconds, thus it dynamically schedules the best-effort traffic. DMS dramatically outperforms the legacy approach by unveiling near-optimal results when the game $\Omega$ convergence is reached.  Once the traffic demand of guaranteed traffic users increases at $t=119s$, DMS reserves the full ABSF pattern for the GBR leaving no space to best-effort demands. Then, the optimization restarts, showing again outstanding results when compared to a legacy solution and closely following the optimal results achieved by solving the centralized Problem~\texttt{BE}. 

We can finally state that the advantage brought by DMS is threefold: ($i$) near-optimal results are easily achieved after few seconds without requiring high-complexity solutions, ($ii$) the overall system capacity is drastically increased due to a smart usage of system resources and to a strong limitation of the inter-cell interference, and ($iii$) it can easily adapt to scenario changes supporting guaranteed traffic and smartly managing best-effort requests.

\begin{figure} [!t]
\centering
	\includegraphics[width=0.4\textwidth]{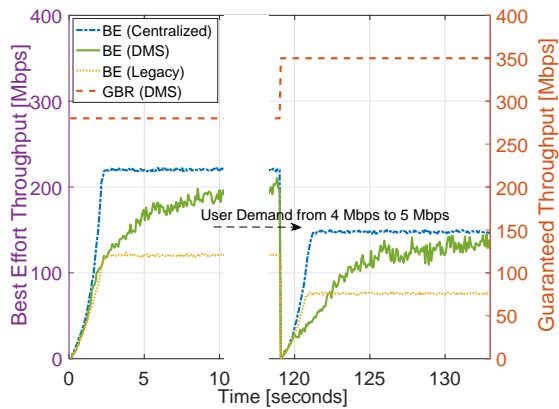}
	\caption{\footnotesize System throughput of DMS for both guaranteed and best effort traffic in the standard network scenario.
	}
	\label{fig:dms_merged}
\end{figure}

\begin{figure} [!t]
\centering
	\includegraphics[width=0.4\textwidth]{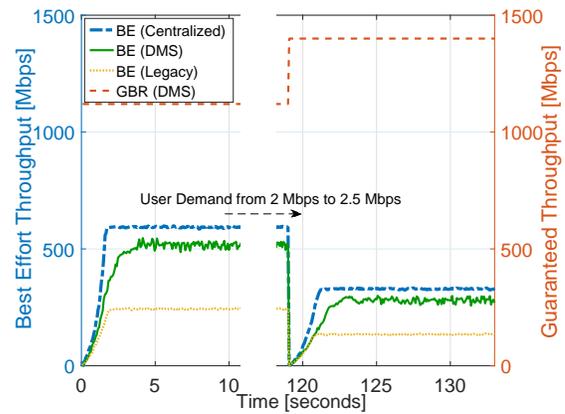}
	\caption{\footnotesize System throughput of DMS for both guaranteed and best effort traffic in the dense network scenario.
	}
	\label{fig:dms_dense_merged}
\end{figure}

\section{Related work}
\label{s:related}

The ABSF technique is becoming popular because it is suitable for eICIC in LTE, it requires minimal changes in the operation of base stations and offers flexible tools to trade-off between performance improvement and implementation 
complexity \cite{wang2011time, ghosh2012heterogeneous}. However, designing a mechanism to drive ABSF decisions has turned out to be challenging and multifaceted. For instance, the authors of \cite{cierny2013on} have studied quantitative approaches aiming to determine the best {\it density} of blanked subframes as a function of the traffic distribution.
Other studies focus on heterogeneous scenarios where a macro base station and several small base stations have to coordinate their activities using ABSF patterns \cite{Lei2012resource,kamel2012performance,kamel2013absf}. Also, ICIC approaches in time domain have been investigated in~\cite{enhanced_icic_hetnets}.
Other proposals include access selection in the loop and introduce the concept of Cell Selection Bias \cite{deb2014algorithms}, which improves network spectral efficiency \cite{singh2014joint,yinghao2013joint}. 
However, existing ABSF solutions either require a central entity to gather per-user CSI 
or need additional and continuously updated information on, e.g., topology and propagation environment, which goes well beyond current base station's features and capabilities. As a result, existing ABSF solutions are not scalable and do not adapt quickly when network conditions change. 

Some other solutions for resource management behave similarly to ABSF. For instance, a recent proposal for OFDMA femtocells has been presented in~\cite{yoon2014self}. Although the authors do not explicitly use the ABSF paradigm, their work is based on detecting the best region of the time-frequency space where base stations can transmit, like in ABSF. However, they propose a {\it probe-and-adapt} algorithm to decide whether to use or blank resources. Moreover, they do not require coordination between base stations, which would yield performance limitations. Similarly, the authors 
of~\cite{son2011utility} propose the concept of {\it reuse patterns} for base station activities, which clearly mimics ABSF operations. However, their work focuses on finding the best temporal duration of reuse patterns (in order to maximize the total user throughput) but it does not explain how to generate reuse patterns. Moreover, differently from DMS, the proposal presented in~\cite{son2011utility} does not take into account fairness.
 
The authors of~\cite{ellenbeck2008decentralized} present a game theoretical approach to ICIC. Their approach addresses the coordination among base stations over a set of finite resources as a non-cooperative game. However, they only target the minimization of  the perceived interference, and do not take into account user scheduling. 
 
None of the above proposals embodies the set of features that characterize our approach and can be summarized as follows: $(i)$ the distributed ABSF interference coordination problem has been formalized and its convergence investigated, $(ii)$ the proposed mechanism is semi-distributed and the complexity of the centralized coordination 
is abated and split among base stations requiring little signaling exchange with the central coordinator,  $(iii)$ the proposed mechanism is adaptive and can adjust its parameters according to traffic dynamics, $(iv)$ despite the simplicity of the proposed mechanism, our results show remarkable near-optimal performance figures and, $(v)$ at best of our knowledge, there is no literature on the use of ABSF techniques for properly serving inelastic traffic. 

\section{Conclusions}
\label{s:conclusions}

In this paper, we have presented the design of DMS, a practical (distributed and lighweight) approach to optimize inter-cell interference coordination for both guaranteed traffic and best-effort traffic. To design this approach, we have formulated two optimization problems, one for each traffic type, relying on game theory notions. We have then proposed distributed algorithms to solve these optimization problems, and have further conducted analysis to prove the convergence and stability of these algorithms. As a result, with our approach base stations make scheduling decisions for serving guaranteed traffic by using as few TTIs as possible, leaving the room for best-effort traffic, which is efficiently served. 

Due to the simplicity of our approach and its limited control overhead, this is to the best of our knowledge a first attempt towards an efficient, scalable and adaptive implementation of ABSF that simultaneously addresses multiple traffic types and provides a viable solution to be deployed in real networks. 
Our numerical results show that DMS achieves near-optimal results with respect to a centralized omniscient network scheduler, and achieves performance levels similar to schemes relying on complex power control functionality. 


\bibliographystyle{IEEEtran}
\bibliography{biblioinfocom2014,bibliography}

\begin{thebibliography}{10}
\providecommand{\url}[1]{#1}
\csname url@samestyle\endcsname
\providecommand{\newblock}{\relax}
\providecommand{\bibinfo}[2]{#2}
\providecommand{\BIBentrySTDinterwordspacing}{\spaceskip=0pt\relax}
\providecommand{\BIBentryALTinterwordstretchfactor}{4}
\providecommand{\BIBentryALTinterwordspacing}{\spaceskip=\fontdimen2\font plus
\BIBentryALTinterwordstretchfactor\fontdimen3\font minus
  \fontdimen4\font\relax}
\providecommand{\BIBforeignlanguage}[2]{{%
\expandafter\ifx\csname l@#1\endcsname\relax
\typeout{** WARNING: IEEEtran.bst: No hyphenation pattern has been}%
\typeout{** loaded for the language `#1'. Using the pattern for}%
\typeout{** the default language instead.}%
\else
\language=\csname l@#1\endcsname
\fi
#2}}
\providecommand{\BIBdecl}{\relax}
\BIBdecl

\bibitem{cisco}
Cisco, ``Cisco visual networking index: Global mobile data traffic forecast
  update, 2013-2018,'' \emph{White paper}, February 2014.

\bibitem{5G-MiWEBA}
K.~Sakaguchi, G.~K. Tran, H.~Shimodaira, S.~Nanba, T.~Sakurai, K.~Takinami,
  I.~Siaud, E.~C. Strinati, A.~Capone, I.~Karls, R.~Arefi, and T.~Haustein,
  ``Millimeter-wave evolution for 5g cellular networks,'' \emph{IEICE Trans. on
  Communications}, vol. E98-B, no.~3, pp. 388--402, 2015.

\bibitem{LTEadv}
{Third Generation Partnership Project (3GPP)}, ``{Requirements for further
  advancements for Evolved Universal Terrestrial Radio Access (E-UTRA)
  (LTE-Advanced)},'' {3GPP TS 36.913 v 13.0.0}, January 2016.

\bibitem{absf}
------, ``{Evolved Universal Terrestrial Radio Access Network (E-UTRAN); X2
  application protocol (X2AP)},'' {3GPP TS 36.423 v. 14.0.0}, September 2016.

\bibitem{GMO_2016}
O.~Grøndalen, K.~Mahmood, and O.~N. Østerbø, ``On attachment optimization
  and muting pattern selection in eicic,'' in \emph{2016 IEEE 27th Annual
  International Symposium on Personal, Indoor, and Mobile Radio Communications
  (PIMRC)}, Sept 2016, pp. 1--6.

\bibitem{SciancaleporeMBZCP14}
V.~Sciancalepore, V.~Mancuso, A.~Banchs, S.~Zaks, and A.~Capone, ``Enhanced
  content update dissemination through {D2D} in {5G} cellular networks,''
  \emph{IEEE Transactions on Wireless Communications}, vol.~15, pp. 7517--7530,
  Nov 2016.

\bibitem{SMB13}
V.~Sciancalepore, V.~Mancuso, and A.~Banchs, ``{BASICS}: Scheduling base
  stations to mitigate interferences in cellular networks,'' in \emph{IEEE
  WoWMoM 2013}.

\bibitem{xueying2011cell}
X.~Hou, E.~Bjornson, C.~Yang, and M.~Bengtsson, ``Cell-grouping based
  distributed beamforming and scheduling for multi-cell cooperative
  transmission,'' in \emph{IEEE PIMRC 2011}.

\bibitem{irmer2011coordinated}
R.~Irmer, H.~Droste, P.~Marsch, M.~Grieger, G.~Fettweis, S.~Brueck, H.~Mayer,
  L.~Thiele, and V.~Jungnickel, ``Coordinated multipoint: Concepts,
  performance, and field trial results,'' \emph{IEEE Communications Magazine},
  vol.~49, no.~2, pp. 102--111, 2011.

\bibitem{ic_ofdma}
M.~C. Necker, ``{Interference Coordination in Cellular OFDMA Networks},''
  \emph{IEEE Network}, vol.~22, no.~6, p.~12, December 2008.

\bibitem{singh2014joint}
S.~Singh and J.~Andrews, ``Joint resource partitioning and offloading in
  heterogeneous cellular networks,'' \emph{IEEE Transactions on Wireless
  Communications}, vol.~13, no.~2, pp. 888--901, February 2014.

\bibitem{yinghao2013joint}
Y.~Jin and L.~Qiu, ``Joint user association and interference coordination in
  heterogeneous cellular networks,'' \emph{IEEE Communications Letters},
  vol.~17, no.~12, pp. 2296--2299, December 2013.

\bibitem{son_commag}
C.~Prehofer and C.~Bettstetter, ``Self-organization in communication networks:
  principles and design paradigms,'' \emph{Communications Magazine, IEEE},
  vol.~43, no.~7, pp. 78--85, July 2005.

\bibitem{shroff_ubpc}
M.~Xiao, N.~Shroff, and E.~K.~P. Chong, ``A utility-based power-control scheme
  in wireless cellular systems,'' \emph{Networking, IEEE/ACM Transactions on},
  vol.~11, no.~2, pp. 210--221, Apr 2003.

\bibitem{refim}
K.~Son, S.~Lee, Y.~Yi, and S.~Chong, ``{REFIM}: A practical interference
  management in heterogeneous wireless access networks,'' \emph{Journal on
  Selected Areas in Communications}, vol.~29, no.~6, pp. 1260--1272, 2011.

\bibitem{LTEadv36213}
{Third Generation Partnership Project (3GPP)}, ``{Physical layer procedures
  (Release 14) for Evolved Universal Terrestrial Radio Access (E-UTRA)},''
  {3GPP TS 36.213 v. 14.0.0}, September 2016.

\bibitem{goussevskaia:capacity}
O.~Goussevskaia, R.~Wattenhofer, M.~M. Halldorsson, and E.~Welzl, ``Capacity of
  arbitrary wireless networks,'' in \emph{IEEE INFOCOM 2009}.

\bibitem{Bertsekas}
D.~Bertsekas and R.~Gallager, \emph{Data Networks (II Ed.)}.\hskip 1em plus
  0.5em minus 0.4em\relax Prentice-Hall.

\bibitem{harks2013bottleneck}
T.~Harks, M.~Hoefer, M.~Klimm, and A.~Skopalik, ``Computing pure nash and
  strong equilibria in bottleneck congestion games,'' \emph{Springer
  Mathematical Programming}, vol. 141, no. 1-2, pp. 193--215, 2013.

\bibitem{secon15_sfmcb}
V.~Sciancalepore, I.~Filippini, V.~Mancuso, A.~Capone, and A.~Banchs, ``A
  semi-distributed mechanism for inter-cell interference coordination
  exploiting the {ABSF} paradigm,'' in \emph{IEEE SECON 2015}.

\bibitem{LMR_EWSDN12}
L.~E. Li, Z.~M. Mao, and J.~Rexford, ``Toward software-defined cellular
  networks,'' in \emph{2012 European Workshop on Software Defined Networking},
  Oct 2012, pp. 7--12.

\bibitem{5gnorma}
{5G Novel Radio Multiservice adaptive network Architecture (5G NORMA)},
  ``{Deliverable D3.2 - 5G NORMA network architecture – Intermediate
  report},''
  \url{https://5gnorma.5g-ppp.eu/wp-content/uploads/2017/03/5g_norma_d3-2.pdf/},
  Jan 2017.

\bibitem{aimd_perf}
C.~Liu and E.~Modiano, ``On the performance of additive increase multiplicative
  decrease ({AIMD}) protocols in hybrid space-terrestrial networks,''
  \emph{Comput. Netw.}, vol.~47, no.~5, pp. 661--678, Apr. 2005.

\bibitem{SDN4_crowd}
H.~Ali-Ahmad, C.~Cicconetti, A.~De~La~Oliva, M.~Draxler, R.~Gupta, V.~Mancuso,
  L.~Roullet, and V.~Sciancalepore, ``{CROWD: An SDN Approach for DenseNets},''
  in \emph{Software Defined Networks (EWSDN), 2013 Second European Workshop
  on}, Oct 2013, pp. 25--31.

\bibitem{wang2011time}
Y.~Wang and K.~Pedersen, ``Time and power domain interference management for
  {LTE} networks with macro-cells and {HeNBs},'' in \emph{IEEE VTC 2011}.

\bibitem{ghosh2012heterogeneous}
A.~Ghosh, N.~Mangalvedhe, R.~Ratasuk, B.~Mondal, M.~Cudak, E.~Visotsky, T.~A.
  Thomas, J.~G. Andrews, P.~Xia, H.~S. Jo, H.~S. Dhillon, and T.~D. Novlan,
  ``Heterogeneous cellular networks: From theory to practice,'' \emph{IEEE
  Communications Magazine}, no.~6, pp. 54--64, 2012.

\bibitem{cierny2013on}
M.~Cierny, H.~Wang, R.~Wichman, Z.~Ding, and C.~Wijting, ``On number of almost
  blank subframes in heterogeneous cellular networks,'' \emph{IEEE Trans. on
  Wireless Comm.}, vol.~12, no.~10, pp. 5061--5073, Oct. 2013.

\bibitem{Lei2012resource}
L.~Jiang and M.~Lei, ``Resource allocation for {eICIC} scheme in heterogeneous
  networks,'' in \emph{IEEE PIMRC 2012}.

\bibitem{kamel2012performance}
M.~Kamel and K.~M. Elsayed, ``Performance evaluation of a coordinated
  time-domain {eICIC} framework based on {ABSF} in heterogeneous {LTE-Advanced}
  networks,'' in \emph{IEEE GLOBECOM 2012}.

\bibitem{kamel2013absf}
M.~Kamel and K.~Elsayed, ``{ABSF} offsetting and optimal resource partitioning
  for {eICIC} in {LTE-Advanced}: Proposal and analysis using a nash bargaining
  approach,'' in \emph{IEEE ICC 2013}, pp. 6240--6244.

\bibitem{enhanced_icic_hetnets}
D.~Lopez-Perez, I.~Guvenc, G.~de~la Roche, M.~Kountouris, T.~Q.~S. Quek, and
  J.~Zhang, ``Enhanced intercell interference coordination challenges in
  heterogeneous networks,'' \emph{IEEE Wireless Communications}, vol.~18,
  no.~3, pp. 22--30, June 2011.

\bibitem{deb2014algorithms}
S.~Deb, P.~Monogioudis, J.~Miernik, and J.~Seymour, ``Algorithms for enhanced
  inter-cell interference coordination ({eICIC}) in {LTE} hetnets,''
  \emph{IEEE/ACM Trans. on Networking}, vol.~22, no.~1, pp. 137--150, Feb 2014.

\bibitem{yoon2014self}
J.~Yoon, M.~Y. Arslan, K.~Sundaresan, S.~V. Krishnamurthy, and S.~Banerjee,
  ``Self-organizing resource management framework in {OFDMA} femtocells,''
  \emph{IEEE Trans. on Mobile Computing}, vol.~14, no.~4, pp. 843--857, 2015.

\bibitem{son2011utility}
K.~Son, Y.~Yi, and S.~Chong, ``Utility-optimal multi-pattern reuse in
  multi-cell networks,'' \emph{IEEE Trans. on Wireless Communications},
  vol.~10, no.~1, pp. 142--153, 2011.

\bibitem{ellenbeck2008decentralized}
J.~Ellenbeck, C.~Hartmann, and L.~Berlemann, ``Decentralized inter-cell
  interference coordination by autonomous spectral reuse decisions,'' in
  \emph{European Wireless}, 2008, pp. 1--7.

\end{thebibliography}

\clearpage 

\appendix
\begin{figure}[!t]
\large
\bf
\centering
Supplementary downloadable material 
\end{figure}


\label{s:appendix}
\begin{replemma}{Theo:convergence}
Given that the players' actions belong to whatever action profile $\sigma$, after a finite number of single-step best responses (SSBR), all players' actions will belong to a saturation action profile $\overline{\sigma}$.
\end{replemma}

\begin{IEEEproof}
We prove the lemma by contradiction.
Let us define two sets of players that represent the state of the game at a given point before reaching a saturation action profile $\overline{\sigma}$. 
Specifically, set $\mathcal{P}$ includes all players that have penalties, and set $\mathcal{\bar{P}}$ includes the remaining players, which incur zero penalty. 
Assume now that all players switch to Single-Step Best Response (SSBR) at a given point in time, and consider the composition of $\mathcal{P}$ and $\mathcal{\bar{P}}$ at that point. 

If the lemma were incorrect, the game could evolve from this state and at least one player could not reach a saturation action profile $\overline{\sigma}$.
This possibility implies that sets $\mathcal{P}$ and $\mathcal{\bar{P}}$ are always nonempty at the end of each round of the game. If not, the evolution of the game would lead to have either  - case $c.ii$ - all players using all the TTIs (all players being in $\mathcal{P}$ would cause a progressive increment in the use of TTIs until all TTIs are used by all players), or - case $c.i$ - all players incurring no penalty (all players in $\mathcal{\bar{P}}$ would reduce the use of TTIs as much as possible, without incurring in penalties).
Note that a third case is possible - case $c.iii$ -, in which a subset of users is either in case $c.ii$ or in $c.i$, while the remaining ones, although incurring in penalties, have no incentive to change their strategy. Indeed, further adding slots does not produce an objective function improvement, as null contributions are added, whereas limiting the number of occupied slots surely does not reduce the penalty. In all cases, we would have a saturation action profile for all players.

Moreover, there must exist a continuous flow of players moving between $\mathcal{P}$ and $\mathcal{\bar{P}}$ while the game evolves. In fact, should the flow stop after a finite number of actions, all players in $\mathcal{P}$ would increase the number of TTIs used until they reach the maximum (because they have penalties to pay) or no further addition can be beneficial because of the null contribution, and all players in  $\mathcal{\bar{P}}$ would decrease the number of TTIs used without incurring any penalty. However, by definition, this would be a saturation action profile $\overline{\sigma}$. 

Therefore, to admit that SSBR does not lead to a saturation action profile, we have to admit that players action continuously between $\mathcal{P}$ and $\mathcal{\bar{P}}$. Moreover, since the two sets have to remain nonempty at the end of any round of the game, 
the flows of players from  $\mathcal{P}$ to $\mathcal{\bar{P}}$ and vice versa have to be balanced. With two possible states, 
this also implies that the probability to be in $\mathcal{P}$ is the same as the probability to be in $\mathcal{\bar{P}}$; therefore, the average sojourn times of a player in $\mathcal{P}$ and $\mathcal{\bar{P}}$ are the same.
Note that in the case every user is in the condition $c.iii$, we have reached an equilibrium. Indeed, no user can improve its objective function by unilaterally deviating from its strategy. Vice versa, users not satisfying condition $c.iii$ behave as follows.

In particular, let us consider a player $p$ that keeps moving between $\mathcal{P}$ and $\mathcal{\bar{P}}$, and let us call $d$ the average time spent in each of the two states. Let us consider a {\it cycle} of player $p$, from her passage to $\mathcal{P}$ to her return to $\mathcal{\bar{P}}$. 
In the transition $\mathcal{\bar{P}} \rightarrow \mathcal{P}$, $p$ will increase the TTI utilization by 1 unit, to try to come back to $\mathcal{\bar{P}}$ immediately. Then she will spend $d-1$ rounds in 
$\mathcal{P}$, during which she increments by $d-1$ units her TTI utilization. Afterwards, $p$ goes back to $\mathcal{\bar{P}}$, which
can occur with an increase of one TTI or with no changes (because of other player's changes of action). Eventually, player $p$ will spend $d-1$ rounds in $\mathcal{\bar{P}}$, during which she will decrement the TTI utilization by {\it at most} $d-1$ units. The resulting balance is a net increase in the number of used TTIs. Therefore, all players moving continuously between $\mathcal{P}$ and $\mathcal{\bar{P}}$ should eventually end up using all the available TTIs and have no way to further change their action profiles. Hence, the flow between $\mathcal{P}$ and $\mathcal{\bar{P}}$ would stop. This would lead again to a saturation action profile $\overline{\sigma}$. 
\end{IEEEproof}

\begin{replemma}{Theo:sat_point}
At a certain point in time, given that the actions played by any player in the system belong to a saturation action profile $\overline{\sigma}$ if each player chooses a single-step best response (SSBR), the game will converge to a Nash equilibrium.
\end{replemma}
\begin{IEEEproof}
We can derive from Problem~\texttt{GBR-DISTR} the cost function $f_i$ related to a player action $S_i$ and the actions taken by the other players ($S_{-i}$) as follows
\begin{multline}
\label{cost_function}
f_i(S_i,S_{-i}) = |S_i| + \alpha \cdot \sum_{u \in U_i} \rho_u (S_i,S_{-i}), \quad \forall S_i \in \mathbb{S}_i; \\
\rho_u(S_i,S_{-i}) = \max \left(D_u - \sum_{(u,t) \in S_i} c_{u,t} (S_{-i}),0 \right),
\end{multline}
where $\rho_{S_i,S_{-i}}$ is the penalty that user $u$ has to pay in order to satisfy its user traffic demand $D_u$. 
It is clear that each player chooses her single-step best response in order to minimize $f_i(S_i,S_{-i})$. Due to the saturation action profile, if player $i$ presents at step $k-1$ a zero penalty, all users' traffic demands $D_u,~\forall u \in U_i$, are satisfied with the current player's action $S_i^{(k-1)}$. Noticing that the cost function is not decreasing with users' action cardinality, in the case of saturation with zero penalty, the only relevant term in the cost function is the cardinality of the current action $|S_i|$ (i.e., the number of TTIs used for scheduling the users). Hence, at next step $k$, player $i$ will choose a action $S_i^{(k)}$ such that $|S_i^{(k)}| \! \le \! |S_i^{(k-1)}|$ due to the single-step best response, which leads to
\begin{equation}
\label{theo1_cost}
f(S_i^{(k)}, S_{-i}) \leq f(S_i^{(k-1)}, S_{-i}),
\end{equation}
\noindent where $f(\cdot)$ is the cost function defined in~\eqref{cost_function}. 
Given that the player change $S_i^{(k-1)}$ will not increase the inter-cell interference towards the other cells, it may benefit the other players choices. Hence, the penalty value in the cost function will not ever be increased by the other players, and the updated action profile $\sigma^{(k)}$ is still a saturation action profile at step $k$.
Therefore, we deduce that the inequality~\eqref{theo1_cost} will be satisfied for all players' actionss in the system, at any step $k$. 

Since we assume a non-decreasing cost function, each player will get the minimum of the cost function in a finite number of steps. Upon all players choose the particular action returning the minimum of the cost function, they have reached a Nash equilibrium. Furthermore, if players' actions take all available $W$ TTIs with a non-zero penalty,
the players have already reached a Nash equilibrium.
Since they cannot further increase the number of involved TTIs, no further action will improve their cost function. 
The same holds for the users with a non-zero penalty and no objective function improvement if further TTIs were added.
\end{IEEEproof}

\end{document}